\documentclass[11pt]{llncs}

\usepackage{fullpage}
\usepackage{epsfig}
\usepackage{graphics}
\usepackage{amsmath,setspace}
\usepackage{booktabs}
\usepackage{array}
\usepackage{multirow}
\usepackage{amssymb}
\usepackage{caption}
\captionsetup[table]{skip=10pt}
\usepackage{xspace}


\usepackage[dvipsnames,usenames]{color}
\usepackage[colorlinks=true,urlcolor=Blue,citecolor=Green,linkcolor=BrickRed]{hyperref}
\urlstyle{same}

\newcommand{\kSAT}{\textsc{$k$-SAT}\xspace}
\newcommand{\kSUM}{\textsc{$k$-Sum}\xspace}
\newcommand{\SSUM}{\textsc{Subset Sum}\xspace}
\newcommand{\bipath}{\textsc{Bicriteria $s,t$-Path}\xspace}
\newcommand{\EkPath}{\textsc{Exact $k$-Path}\xspace}
\newcommand{\EkBiPath}{\textsc{Exact Bicriteria $k$-Path}\xspace}
\newcommand{\eps}{\ensuremath{\varepsilon}}
\newcommand{\poly}{\textup{poly}}

\newtheorem{oq}{Open Question}


\pagestyle{plain}
\sloppy

\begin{document}

\title{SETH-Based Lower Bounds for\\ Subset Sum and Bicriteria Path}

\author{
Amir Abboud\inst{1} \and
Karl Bringmann\inst{2} \and
Danny Hermelin \inst{3}  \and Dvir Shabtay \inst{3}}
\institute{
Department of Computer Science,\\ Stanford University, CA, USA\\
\href{mailto:abboud@cs.stanford.edu}{abboud@cs.stanford.edu}
\and
Max Planck Institute for Informatics,\\ Saarland Informatics Campus, Germany\\
\href{mailto:kbringma@mpi-inf.mpg.de}{kbringma@mpi-inf.mpg.de}
\and
Department of Industrial Engineering and Management,\\ Ben-Gurion University, Israel\\
\href{mailto:hermelin@bgu.ac.il}{hermelin@bgu.ac.il}, \href{dvirs@bgu.ac.il}{dvirs@bgu.ac.il}
}

\maketitle

\begin{abstract}


\SSUM and $k$-SAT are two of the most extensively studied problems in computer science, and conjectures about their hardness are among the cornerstones of fine-grained complexity.
An important open problem in this area is to base the hardness of one of these problems on the other.

\hspace{10pt} Our main result is a tight reduction from $k$-SAT to \SSUM on dense instances, proving that Bellman's 1962 pseudo-polynomial $O^{*}(T)$-time algorithm for \SSUM on $n$ numbers and target~$T$ cannot be improved to time $T^{1-\eps}\cdot 2^{o(n)}$ for any $\eps>0$, unless the Strong Exponential Time Hypothesis (SETH) fails.

\hspace{10pt} As a corollary, we prove a ``Direct-OR" theorem for \SSUM under SETH, offering a new tool for proving conditional lower bounds:
It is now possible to assume that deciding whether one out of $N$ given instances of \SSUM is a YES instance requires time $(N T)^{1-o(1)}$.
As an application of this corollary, we prove a tight SETH-based lower bound for the classical \bipath problem, which is extensively studied in Operations Research. We separate its complexity from that of \SSUM:
On graphs with $m$ edges and edge lengths bounded by $L$, we show that the $O(Lm)$ pseudo-polynomial time algorithm by Joksch from 1966 cannot be improved to $\tilde{O}(L+m)$, in contrast to a recent improvement for Subset Sum (Bringmann, SODA 2017).

%
\end{abstract}

\section{Introduction}
\label{sec: introduction}


The field of fine-grained complexity is anchored around certain hypotheses about the exact time complexity of a small set of core problems.
Due to dozens of reductions, we now know that the current algorithms for many important problems are optimal unless breakthrough algorithms for the core problems exist.
A central challenge in this field is to understand the connections and relative difficulties among these core problems.
In this work, we discover a new connection between two core problems: a tight reduction from {\sc $k$-SAT} to \SSUM.

In the first part of the introduction we discuss this new reduction and how it affects the landscape of fine-grained complexity.
Then, in Section~\ref{subseq:booleansum}, we highlight a corollary of this reduction which gives a new tool for proving conditional lower bounds.
As an application, in Section~\ref{subseq:bipath}, we prove the first tight bounds for the classical \bipath problem from Operations Research.

\paragraph{\bf Subset Sum.}
\SSUM is a fundamental problem in computer science.
Its most basic form is the following: Given $n$ integers $x_1,\ldots,x_n \in \mathbb{N}$, and a target value $T \in \mathbb{N}$, decide whether there is a subset of the numbers that sums to $T$.
The two most classical algorithms for the problem are the pseudo-polynomial $O(T n)$ algorithm using dynamic programming \cite{bellman1957dynamic}, and the $O(2^{n/2} \cdot \textup{poly}(n,\log{T}))$ algorithm via ``meet-in-the-middle" \cite{HS74}.
A central open question in Exact Algorithms \cite{Woe08} is whether faster algorithms exist, \emph{e.g.}, can we combine the two approaches to get a $T^{1/2}\cdot n^{O(1)}$ time algorithm? Such a bound was recently found in a Merlin-Arthur setting \cite{Ned17}.

\begin{oq}
\label{oq1}
Is \SSUM in time $T^{1-\eps} \cdot 2^{o(n)}$ or $2^{(1-\eps)\frac{n}{2}} \cdot T^{o(1)}$, for some $\eps>0$?
\end{oq}

The status of \SSUM as a major problem has been established due to many applications, deep connections to other fields, and educational value.
The $O(Tn)$ algorithm from 1957 is an illuminating example of dynamic programming that is taught in most undergraduate algorithms courses, and the NP-hardness proof (from Karp's original paper \cite{Karp72}) is a prominent example of a reduction to a problem on numbers.
Interestingly, one of the earliest cryptosystems by Merkle and Hellman was based on \SSUM \cite{MH78}, and was later extended to a host of \emph{Knapsack-type} cryptosystems\footnote{Cryptographers usually refer to \SSUM as Knapsack.} (see \cite{Sha84,BO88,Odl90,CR88,IN96} and the references therein).

The version of \SSUM where we ask for $k$ numbers that sum to zero (the \kSUM problem) is conjectured to have $n^{\lceil k/2 \rceil \pm o(1)}$ time complexity.
Most famously, the $k=3$ case is the \textsc{3-Sum} conjecture highlighted in the seminal work of Gajentaan and Overmars~\cite{GO95}.
It has been shown that this problem lies at the core and captures the difficulty of dozens of problems in computational geometry. Searching in Google Scholar for ``3sum-hard" reveals more than 250 papers (see~\cite{King04} for an incomplete list).
More recently, these conjectures have become even more prominent as core problems in fine-grained complexity since their interesting consequences have expanded beyond geometry into purely combinatorial problems \cite{Pat10,VW09,B+13,JV13,ACLL14,AV14,KPP14,AVY15,GKLP16,JV13}.
Note that \kSUM inherits its hardness from \SSUM, by a simple reduction that partitions items into $k$ groups: To answer Open Question~\ref{oq1} positively it is enough to solve \kSUM in time $T^{1-\eps}\cdot n^{o(k)}$ (for all $k$) or in time $n^{k/2-\eps}\cdot T^{o(1)}$ (for any $k$).

Entire books \cite{MT90book,KPP04book} are dedicated to the algorithmic approaches that have been used to attack \SSUM throughout many decades, and, quite astonishingly, major algorithmic advances are still being discovered in our days, \emph{e.g.}, \cite{LN10,HGJ10,BCJ11,DKS12,Austrin13,Wang14,GP14,GS15,CL15,AKKN15,AKKN16,LWWW16,Freund17,BGNV17,Ned17,KoiliarisX17,Bring17}, not to mention the recent developments on generalized versions (see \cite{BCILOS16}) and other computational models (see \cite{Viola15,CIO16}).
At STOC'17 an algorithm was presented that beats the trivial $2^n$ bound while using polynomial space, under certain assumptions on access to random bits~\cite{BGNV17}.
At SODA'17 we have seen the first improvements (beyond log factors \cite{pisinger2003dynamic}) over the $O(Tn)$ algorithm, reducing the bound to $\tilde{O}(T+n)$ \cite{KoiliarisX17,Bring17}.
And a few years earlier, a surprising result celebrated by cryptographers \cite{HGJ10,BCJ11} showed that $2^{0.499}$ algorithms are possible on random instances.
All this progress leads to the feeling that a positive resolution to Open Question 1 might be just around the corner.

\paragraph{\bf SETH}
$k$-SAT is an equally fundamental problem (if not more) but of a Boolean rather than numerical nature, where we are given a $k$-CNF formula on $n$ variables and $m$ clauses, and the task is to decide whether it is satisfiable.
All known algorithms have a running time of the form $O(2^{(1-c/k) n})$ for some constant $c>0$ \cite{PPSZ05,DH09,AWY15}, and the Strong Exponential Time Hypothesis (SETH) of Impagliazzo and Paturi \cite{IP2001,IPZ2001,CIP06} states that no $O(2^{(1-\eps)n})$ time algorithms are possible for $k$-SAT, for some $\eps>0$ independent of $k$.
Refuting SETH implies advances in circuit complexity \cite{JMV15}, and is known to be impossible with popular techniques like resolution \cite{BI13}.

A seminal paper of Cygan, Dell, Lokshtanov, Marx, Nederlof, Okamoto, Paturi, Saurabh, and Wahlstr\"om \cite{Cygan+16} strives to classify the exact complexity of important NP-hard problems under SETH.
The authors design a large collection of ingenious reductions and conclude that $2^{(1-\eps)n}$ algorithms for problems like Hitting Set, Set Splitting, and Not-All-Equal SAT are impossible under SETH.
Notably, \SSUM is not in this list nor any problem for which the known algorithms are non-trivial (\emph{e.g.}, require dynamic programming).
As the authors point out:
``\emph{Of course, we would also like to show tight connections between SETH and the
optimal growth rates of problems that {\bf do} have non-trivial exact algorithms.}"

Since the work of Cygan et al. \cite{Cygan+16}, SETH has enjoyed great success as a basis for lower bounds in Parameterized Complexity \cite{LMS11} and for problems within P \cite{Vass15}.
Some of the most fundamental problems on strings (\emph{e.g.}, \cite{AVW14,BI15,ABV15a,BK15,BI16,BGL16}), graphs (\emph{e.g.}, \cite{LMS11a,RV13,AV14,GIKW17}), curves (\emph{e.g.}, \cite{Bring14}), vectors \cite{Wil05,WY14,BIL17,BGS17} and trees \cite{ABHVZ16} have been shown to be \emph{SETH-hard}: a small improvement to the running time of these problems would refute SETH.
Despite the remarkable quantity and diversity of these results, we are yet to see a (tight) reduction from SAT to any problem like \SSUM, where the complexity comes from the hardness of analyzing a search space defined by addition of numbers.
In fact, all hardness results for problems of a more \emph{number theoretic} or \emph{additive combinatoric} flavor are based on the conjectured hardness of \SSUM itself.

\medskip

In this paper, we address an important open questions in the field of fine-grained complexity: \emph{Can we prove a tight SETH-based lower bound for \SSUM?}
\medskip

The standard NP-hardness proofs imply loose lower bounds under SETH (in fact, under the weaker ETH) stating that $2^{o(\sqrt{n})}$ algorithms are impossible.
A stronger but still loose result
rules out $2^{o(n)} \cdot T^{o(1)}$-time algorithms for \SSUM under ETH~\cite{DBLP:journals/siamdm/JansenLL16,DBLP:journals/mst/BuhrmanLT15}.
Before that, 
Patrascu and Williams \cite{PW10} showed that if we solve \kSUM in $n^{o(k)} \cdot T^{o(1)}$ time, then ETH is false.
These results leave the possibility of $O(T^{0.001})$ algorithms.
While it is open whether such algorithms imply new SAT algorithms, it has been shown that they would imply new algorithms for other famous problems.
Bringmann \cite{Bring17} recently observed that an $O(T^{0.78})$ algorithm for \SSUM implies a new algorithm for $k$-Clique, via a reduction of Abboud, Lewi, and Williams \cite{AbboudLW2014}.
Cygan et al. \cite{Cygan+16} ruled out $O(T^{1-\eps} \poly(n))$ algorithms for \SSUM under the conjecture that the Set Cover problem on $m$ sets over a universe of size $n$ cannot be solved in $O(2^{(1-\eps)n}\cdot \poly(m))$ time. Whether this conjecture can be replaced by the more popular SETH remains a major open question.


\subsection{Main Result}
\label{subseq:main}

We would like to show that SETH implies a negative resolution to Open Question 1.
Our main result accomplishes half of this statement, showing a tight reduction from SAT to \SSUM on instances where $T = 2^{\delta n}$, also known as \emph{dense} instances\footnote{The density of an instance is $n/\log_2 \max{x_i}$.}, ruling out $T^{1-\eps}\cdot 2^{o(n)}$ time algorithms under SETH.

\begin{theorem} \label{thm:sethsubsetsum}
  Assuming SETH, for any $\eps > 0$ there exists a $\delta > 0$ such that \SSUM is not in time $O(T^{1-\eps} 2^{\delta n})$, and \kSUM is not in time $O(T^{1-\eps} n^{\delta k})$.
\end{theorem}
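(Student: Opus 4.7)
The plan is to design a reduction from $k$-SAT on $N$ variables to \SSUM whose parameters are tight enough that a hypothetical $O(T^{1-\eps}\cdot 2^{\delta n})$-time algorithm for \SSUM would yield a $2^{(1-\eps^*)N}$-time algorithm for $k$-SAT and refute SETH. The approach follows the \emph{split-and-list} paradigm together with a careful digit-based encoding and slack items that convert clause-OR into a SUBSET SUM equality. I would first apply the Impagliazzo--Paturi--Zane sparsification lemma, paying a $2^{\eps' N}$ branching factor (to be absorbed into the $\eps$-gap), so that the input $k$-CNF has at most $M = c(k,\eps')\cdot N$ clauses.

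Next I would partition the $N$ variables into $g$ groups of $N/g$ variables, with $g$ a parameter tuned as a function of $\eps$, and for each group $i$ and each of the $2^{N/g}$ partial assignments $\alpha$ to that group I would create one item $x_{i,\alpha}$. Each $x_{i,\alpha}$ is a number written in a base $B > gk$ (to prevent carries across positions), built from a ``group-identity'' block that forces exactly one item per group and, for every clause $C_j$, a digit equal to the number of literals of $C_j$ satisfied by $\alpha$ restricted to group $i$'s variables. To turn ``at least one group satisfies $C_j$'' into an equality constraint, I would add $O(\log(gk))$ slack items per clause whose weights are power-of-two multiples of the clause-$j$ digit, and set the target so that any per-clause base-sum in $\{1,\dots,gk\}$ can be completed to the target via slack, but a base-sum of $0$ cannot. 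The \kSUM version then specializes $g = k$ and treats the per-group item lists as the $k$ input lists.

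The final step is a parameter analysis: the construction produces $n = g\cdot 2^{N/g} + O(M\log(gk))$ items and $\log T = O((g+M)\log(gk))$, so a hypothetical $O(T^{1-\eps}\cdot 2^{\delta n})$-time algorithm would solve $k$-SAT in time $2^{(1-\eps)\cdot O((g+M)\log(gk))\,+\,\delta n}$. By choosing $g$ so that $2^{N/g}$ is polynomial in $N$ (hence $n = O(N)$), $\eps'$ small in the sparsification, and $\delta = \delta(\eps)$ sufficiently small, I would argue that this running time beats $2^{(1-\eps^*)N}$ for some $\eps^* > 0$. The main obstacle I expect is controlling the coefficient in front of $N$ in $\log T$: the naive ``one digit per clause'' encoding gives $\log T$ proportional to $M\log(gk)$, which is only $\Theta(N)$ with a potentially large constant, and turning a $T^{1-\eps}$ saving into an $\eps^* N$ saving in the exponent forces that constant essentially down to $1$ as $\eps \to 0$. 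Overcoming this will likely require a more information-efficient encoding of satisfaction patterns, for instance bundling clauses into shared digit positions or compressing per-group satisfaction into a single witness digit, and the proof ultimately rests on carrying out such an encoding while keeping $n$ linear in $N$ and $\log T$ close to $N$.
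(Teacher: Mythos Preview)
Your proposal sets up the right scaffolding (sparsification, digit/block encoding, parameter tuning) and, crucially, you correctly isolate the real difficulty: after sparsification the formula has $M=c(k,\eps')N$ clauses with $c(k,\eps')\gg 1$, so any ``one digit per clause'' scheme forces $\log T \ge c(k,\eps')N$, and then a $T^{1-\eps}$ algorithm still runs in time $2^{(1-\eps)c(k,\eps')N}\gg 2^N$. You openly leave this as the step on which ``the proof ultimately rests,'' so as written the argument is incomplete, and the suggestions you float (bundling clauses into shared positions, slack-item tricks) do not by themselves bring the leading constant in $\log T$ down to~$1$.

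The missing ingredient, and the heart of the paper's proof, is a structurally different encoding based on \emph{$\lambda$-average-free sets} (Behrend's construction). Rather than one digit per clause, the paper first packs the sparsified formula into a CSP with $\hat n=\lceil N/a\rceil$ super-variables over $[2^a]$ and the same number of super-constraints, each of bounded degree~$\lambda$. The \SSUM instance then has one block \emph{per super-variable}, not per clause. In $x$'s block, the variable item contributes $\lambda B - d(x)\cdot f(\alpha)$ and each of the $d(x)\le\lambda$ constraint items containing $x$ contributes $f(\alpha_i)$, where $f$ maps $[2^a]$ injectively into a $\lambda$-average-free set $S\subset[0,B]$ with $B=\lambda^{O(1/\eps)}2^{(1+\eps)a}$. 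The average-free property is exactly what forces $f(\alpha_1)=\cdots=f(\alpha_{d(x)})=f(\alpha)$ whenever the block sums to the target $\lambda B$, i.e., it enforces consistency without spending a separate position per constraint. Each block costs only $(1+\eps)a+O_{k,\eps}(\log a)$ bits, and with $N/a$ blocks (plus $O(N/a)$ selector bits) this yields $\log T\le(1+2\eps)N$ once $a$ is large enough. That is precisely the ``constant essentially down to~$1$'' you were after, and it is not achievable by slack items layered on a per-clause digit layout.

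In short: your framework is sound and your diagnosis of the bottleneck is exactly right, but the proof needs the average-free-set encoding (blocks indexed by variables, consistency via Behrend) rather than the clause-indexed encoding you sketch.
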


Thus, \SSUM is yet another SETH-hard problem. This is certainly a major addition to this list.
This also adds many other problems that have reductions from \SSUM, \emph{e.g.}, the famous Knapsack problem, or from \kSUM (\emph{e.g.}, \cite{Eri99,BIWX11,AL13,CP14,KSR16}).
For some of these problems, to be discussed shortly, this even leads to better lower bounds.

Getting a reduction that also rules out $2^{(1-\eps)n/2} \cdot T^{o(1)}$ algorithms under SETH is still a fascinating open question.
Notably, the strongest possible reduction, ruling out $n^{\lceil k/2 \rceil- \eps} \cdot T^{o(1)}$ algorithms for \kSUM, is provably impossible under the \emph{Nondeterministic SETH} of Carmosino et al. \cite{C+16}, but there is no barrier for an $n^{ k/2- o(1)}$ lower bound.

A substantial technical barrier that we had to overcome when designing our reduction is the fact that there was no clear understanding of what the hard instances of \SSUM should look like.
Significant effort has been put into finding and characterizing the instances of \SSUM and Knapsack that are hard to solve.
This is challenging both from an experimental viewpoint (see the study of Pisinger \cite{Pis05}) and from the worst-case analysis perspective (see the discussion of Austrin et al. \cite{AKKN16}).
Recent breakthroughs refute the common belief that random instances are maximally hard \cite{HGJ10,BCJ11}, and show that better upper bounds are possible for various classes of inputs.
Our reduction is able to generate hard instances by crucially relying on a deep result on the combinatorics of numbers: the existence of dense average-free sets.
A surprising construction of these sets from 1946 due to Behrend \cite{Behrend1946} (see also \cite{Elkin10,OB11}) has already lead to breakthroughs in various areas of theoretical computer science \cite{CFL83,CW90,AKFS99,HW03,DV10,AB16}.
These are very non-random-like structures in combinatorics (in particular they are regularly used as counterexamples in mathematics and thus they seem to be extremal and far from random), which allows our instances to bypass the easyness of random inputs.
This leads us to a candidate distribution of hard instances for \SSUM, which could be of independent interest:
Start from hard instances of SAT (\emph{e.g.}, random formulas around the threshold) and map them with our reduction (the obtained distribution over numbers will be highly structured).

Recently, it was shown that the security of certain cryptographic primitives can be based on SETH \cite{BRSV17a,BRSV17b}.
We hope that our SETH-hardness for an already popular problem in cryptography will lead to further interaction between fine-grained complexity and cryptography.
In particular, it would be exciting if our hard instances could be used for a new Knapsack-type cryptosystem. Such schemes tend to be much more computationally efficient than popular schemes like RSA \cite{BO88,Odl90,IN96}, but almost all known ones are not secure (as famously shown by Shamir \cite{Sha84}).
Even more recently, Bennett, Golovnev, and Stephens-Davidowitz \cite{BGS17} proved SETH hardness for another central problem from cryptography, the Closest-Vector-Problem (CVP). While CVP is a harder problem than Subset Sum, their hardness result addresses a different regime of parameters, and rules out $O(2^{(1-\eps)n})$ time algorithms (when the \emph{dimension} is large). 

\subsection{A Direct-OR Theorem for Subset Sum}
\label{subseq:booleansum}

Some readers might find the above result unnecessary: What is the value in a SETH-based lower bound if we already believe the Set Cover Conjecture of Cygan et al.?
The rest of this introduction discusses new lower bound results that, to our knowledge, would not have been possible without our new SETH-based lower bound.
To clarify what we mean, consider the following ``Direct-OR" version of \SSUM:
Given $N$ different and independent instances of \SSUM, each on $n$ numbers and each with a different target $T_i \leq T$, decide whether any of them is a YES instance.
It is natural to expect the time complexity of this problem to be $(N T)^{1-o(1)}$, but how do we formally argue that this is the case?
If we could assume that this holds, it would be a very useful tool for conditional lower bounds (as we show in Section~\ref{subseq:bipath}).

Many problems, like SAT, have a simple self-reduction proving that the ``Direct-OR'' version is hard, assuming the problem itself is hard:
To solve a SAT instance on $n$ variables, it is enough to solve $2^x$ instances on $n-x$ variables.
This is typically the case for problems where a brute force algorithm achieves the best known running time up to lower order factors.
But what about \SSUM or Set Cover?
Can we use an algorithm that solves $N$ instances of \SSUM in $O(N^{0.1}\cdot T)$ time to solve \SSUM in $O(T^{1-\eps})$ time?
We cannot prove such statements; however, we can prove that such algorithms would refute SETH.

\begin{corollary} \label{cor:direct}
Assuming SETH, for any $\eps > 0$ and $\gamma > 0$ there exists a
$\delta > 0$ such that no algorithm can solve the OR of $N$ given instances of \SSUM on target values $T_1, \ldots, T_N = O(N^\gamma)$ and at most
$\delta \log N$ numbers each, in total time $O(N^{1 + \gamma - \eps})$.
%
\end{corollary}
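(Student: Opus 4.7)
The plan is to reduce hard \SSUM instances (as provided by Theorem~\ref{thm:sethsubsetsum}) to the Direct-OR problem via a meet-in-the-middle split, and argue that a faster-than-claimed OR algorithm would yield a \SSUM algorithm violating Theorem~\ref{thm:sethsubsetsum}.

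Given $\eps,\gamma>0$, I would first invoke Theorem~\ref{thm:sethsubsetsum} with a small parameter $\eps':=\eps/(2\max\{1,\gamma\})$ to obtain a constant $\delta_{\mathrm{Thm}}>0$ such that, under SETH, \SSUM has no $O(T^{1-\eps'}2^{\delta_{\mathrm{Thm}}n})$ algorithm. Then I pick $\delta$ to be any constant satisfying $1+\delta\ge(1-\eps/2)/\delta_{\mathrm{Thm}}$; this is the $\delta$ promised by the corollary.

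For the reduction, consider an \SSUM instance on $n$ numbers $x_1,\ldots,x_n$ with target $T=\lfloor 2^{\gamma n/(1+\delta)}\rfloor$; this is a density regime in which Theorem~\ref{thm:sethsubsetsum} is tight. Split the numbers into $A=\{x_1,\ldots,x_a\}$ with $a:=\lceil n/(1+\delta)\rceil$ and $B=\{x_{a+1},\ldots,x_n\}$ with $b:=n-a$, so $b\le\delta a$. Enumerate the $N:=2^a$ subset-sums $s_1,\ldots,s_N$ of $A$ in time $O(Nn)$, and for each $s_i$ form a new \SSUM instance on the $b$ numbers of $B$ with target $T-s_i$ (discard if out of range). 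By construction, the original instance is YES iff the OR of these $N$ instances is YES; moreover each sub-instance uses at most $b\le\delta\log N$ numbers and has target at most $T\le N^{\gamma}$, so a hypothetical $O(N^{1+\gamma-\eps})$-time OR algorithm would apply.

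The resulting total running time is $O(N^{1+\gamma-\eps})=O(2^{(1+\gamma-\eps)n/(1+\delta)})$. A direct comparison of exponents with $T^{1-\eps'}2^{\delta_{\mathrm{Thm}}n}=2^{\gamma(1-\eps')n/(1+\delta)+\delta_{\mathrm{Thm}}n}$, using our calibration of $\delta,\eps',\delta_{\mathrm{Thm}}$, shows that the former is dominated by the latter, contradicting Theorem~\ref{thm:sethsubsetsum} and hence SETH. The main delicate point is parameter calibration: $\delta$ must be taken \emph{large} (roughly $1/\delta_{\mathrm{Thm}}$), not small. This is consistent with the corollary, which only posits the existence of some $\delta$ depending on $\eps$ and $\gamma$; for downstream applications what matters is that the total input size $N\cdot\delta\log N$ stays at $N^{1+o(1)}$, which it does for any constant $\delta$.
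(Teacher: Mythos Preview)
Your split-and-enumerate self-reduction for \SSUM is natural, but the argument as written does not close. The gap is the sentence ``this is a density regime in which Theorem~\ref{thm:sethsubsetsum} is tight.'' Theorem~\ref{thm:sethsubsetsum} is a \emph{worst-case} statement: it rules out an algorithm achieving $O(T^{1-\eps'}2^{\delta_{\mathrm{Thm}} n})$ on \emph{all} $(n,T)$. To contradict it you must exhibit such an algorithm on all instances; exhibiting one that works only at the single density $T\approx 2^{\gamma n/(1+\delta)}$ proves nothing. Worse, your chosen density lies in the easy regime: with $1+\delta\approx(1-\eps/2)/\delta_{\mathrm{Thm}}$ one has $\log T\approx\gamma\delta_{\mathrm{Thm}} n$, and a short computation shows that plain dynamic programming in time $O(Tn)$ already undercuts $T^{1-\eps'}2^{\delta_{\mathrm{Thm}} n}$ there (this reduces to $\eps'\gamma<1$, which holds for your $\eps'=\eps/(2\max\{1,\gamma\})$). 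So your reduction, even if it worked, would not be beating anything at that density.

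One might hope to repair this by choosing $a$ adaptively as $a=\max\{\lceil(\log T)/\gamma\rceil,\lceil n/(1+\delta)\rceil\}$ and covering all densities, but this runs into trouble when $\log T>\gamma n$ (you cannot take $a>n$), and brute force does not obviously fit under $T^{1-\eps'}2^{\delta_{\mathrm{Thm}} n}$ in that range either. Making this route work seems to require knowing the density of the actual hard instances, i.e., opening up Theorem~\ref{thm:redsatsubsetsum} rather than using Theorem~\ref{thm:sethsubsetsum} as a black box. That is precisely what the paper does, but from the SAT side: it splits the \kSAT \emph{variables} into blocks of sizes $n_1\approx n/(1+\gamma)$ and $n_2\approx\gamma n/(1+\gamma)$, enumerates the $2^{n_1}$ partial assignments, and then applies Theorem~\ref{thm:redsatsubsetsum} to each residual formula on $n_2$ variables. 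This yields $N\approx 2^{n_1}$ \SSUM instances with targets $\le 2^{(1+o(1))n_2}\le N^\gamma$ and $O_{k,\eps}(n_2)=O_{k,\eps}(\log N)$ items each, with all constants explicit, so the contradiction with SETH is direct and no density guesswork is needed.
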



\subsection{The Fine-Grained Complexity of Bicriteria Path}
\label{subseq:bipath}



The \bipath problem is the natural bicriteria variant of the classical \textsc{$s,t$-Path} problem where edges have two types of weights and we seek an $s,t$-path which meets given demands on both criteria.  More precisely, we are given a directed graph~$G$ where each edge $e \in E(G)$ is assigned a pair of non-negative integers $\ell(e)$ and $c(e)$, respectively denoting the \emph{length} and \emph{cost} of $e$, and two non-negative integers $L$ and $C$ representing our budgets. The goal is to determine whether there is an $s,t$-path $e_1,\ldots,e_k$ in $G$, between a given source and a target vertex $s,t \in V(G)$, such that $\sum_{i=1}^k \ell(e_i) \leq L$ and $\sum_{i=1}^k c(e_i) \leq C$. 


This natural variant of \textsc{$s,t$-Path} has been extensively studied in the literature, by various research communities, and has many diverse applications in several areas. Most notable of these are perhaps the applications in the area of transportation networks~\cite{GarroppoEtAl10}, and the \emph{quality of service (QoS) routing problem} studied in the context of communication networks~\cite{LorenzOrda98,YounisF03}. There are also several applications for \bipath in Operations Research domains, in particular in the area of  scheduling~\cite{BriskornEtAl10,LeeEtAl09,NaorEtAl07,ShabtayEtAl14}, and in column generation techniques~\cite{HolmbergYuan03,Zimmermann2003}. Additional applications can be found in road traffic management, navigation systems, freight transportation, supply chain management and pipeline distribution systems~\cite{GarroppoEtAl10}.

A simple reduction proves that \textsc{Bicriteria $s,t$-Path} is at least as hard as \SSUM (see Garey and Johnson~\cite{GareyJohnson79}).
In 1966, Joksch~\cite{Joksch66} presented a dynamic programming algorithm with pseudo-polynomial running time $O(L m)$ (or $O(C m)$) on graphs with $m$ edges. Extensions of this classical algorithm appeared in abundance since then, see \emph{e.g.},~\cite{aneja78,Hansen1980,RaithE09} and the various FPTASs for the optimization variant of the problem~\cite{ErgunEtAl02,GarroppoEtAl10,Hassin92,LorenzRaz01,Warburton87}. The reader is referred to the survey by Garroppo et al.~\cite{GarroppoEtAl10} for further results on \bipath.


Our SETH-based lower bound for \SSUM easily transfers (using, \emph{e.g.}, the reduction in~\cite{GareyJohnson79}) to show that an $O(L^{1-\eps}2^{o(n)})$ time algorithm for \bipath refutes SETH. However, after the $O(Tn)$ algorithm for \SSUM from 1960 was improved last year to $\tilde{O}(T+n)$, it is natural to wonder if the similar $O(Lm)$ algorithm for \bipath from 1966 can also be improved to $\tilde{O}(L+m)$ or even just to $O(Lm^{0.99})$. Such an improvement would be very interesting since the pseudo-polynomial algorithm is commonly used in practice, and since it would speed up the running time of the approximation algorithms. We prove that \bipath is in fact a harder problem than \SSUM, and an improved algorithm would refute SETH. The main application of Corollary~\ref{cor:direct} that we report in this paper is a tight SETH-based lower bound for \bipath, which (conditionally) separates the time complexity of \bipath and \SSUM.
\begin{theorem}
\label{thm:cspath}%
Assuming SETH, for any $\eps > 0$ and $\gamma > 0$ no algorithm solves \bipath on sparse $n$-vertex graphs and budgets $L, C = \Theta(n^\gamma)$ in time $O(n^{1+\gamma-\eps})$.
\end{theorem}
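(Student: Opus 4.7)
The plan is to reduce the Direct-OR version of \SSUM guaranteed by Corollary~\ref{cor:direct} to \bipath via a parallel-gadget construction. Given $N$ \SSUM instances each with $k \le \delta \log N$ numbers $x^{(i)}_1,\dots,x^{(i)}_k$ and target $T_i \le T = O(N^\gamma)$, I build one \emph{subset-sum track} $P_i$ per instance: a directed chain $v^{(i)}_0,\dots,v^{(i)}_k$ in which between $v^{(i)}_{j-1}$ and $v^{(i)}_j$ sit two parallel arcs, an \emph{include} arc with length $x^{(i)}_j$ and cost $0$, and a \emph{skip} arc with length $0$ and cost $x^{(i)}_j$ (subdivide each arc with a fresh vertex if a simple graph is required). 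Any $v^{(i)}_0$-to-$v^{(i)}_k$ walk satisfies length${}+{}$cost${}=S_i:=\sum_j x^{(i)}_j$, so such a walk achieves length exactly $T_i$ and cost exactly $S_i-T_i$ iff instance $i$ has a subset summing to $T_i$.

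Let $L_0 := \max_i T_i$ and $C_0 := \max_i (S_i - T_i)$; both are $O(N^\gamma \log N)$. Glue the tracks to a common source $s$ and sink $t$: for each $i$, add an arc from $s$ to $v^{(i)}_0$ with length $L_0-T_i$ and cost $C_0-(S_i-T_i)$ (nonnegative by construction), and an arc from $v^{(i)}_k$ to $t$ with length and cost $0$. A path through track $i$ is then $(L_0,C_0)$-feasible iff its gadget walk has length exactly $T_i$, so the resulting graph $G$ admits an $(L_0,C_0)$-feasible $s,t$-path iff the Direct-OR instance is a YES. Moreover $G$ has $n_0 = O(Nk)$ vertices, $O(Nk)$ arcs, and is therefore sparse.

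To land in the theorem's regime $L, C = \Theta(n^\gamma)$, I pad as follows. Set $n := \lceil \max(n_0,\, L_0^{1/\gamma},\, C_0^{1/\gamma}) \rceil = N \cdot \poly(\log N)$, attach a chain of $n - n_0 - 1$ dummy vertices with zero-weight arcs, and introduce a new source $s'$ joined to $s$ by one arc of length $n^\gamma - L_0$ and cost $n^\gamma - C_0$ (both nonnegative by choice of $n$). The new budgets are $L=C=n^\gamma=\Theta(n^\gamma)$, and the graph remains sparse on $n$ vertices. An $O(n^{1+\gamma-\eps})$-time algorithm for \bipath in this regime would then solve the Direct-OR problem in time $O((N\cdot\poly(\log N))^{1+\gamma-\eps}) = O(N^{1+\gamma-\eps/2})$ for all sufficiently large $N$, contradicting Corollary~\ref{cor:direct} applied with parameters $(\gamma,\,\eps/2)$, which fixes the required value of $\delta$.

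The main hurdle is parameter bookkeeping rather than any combinatorial insight: one has to verify that the offset weights at $s$ remain nonnegative \emph{simultaneously} with the two-criterion budget constraint forcing exact equality of the chosen subset sum with $T_i$, and that the polylogarithmic blow-up coming from $k = \delta\log N$ is absorbed into the $\eps$-slack via $\poly(\log N) = N^{o(1)}$. Once this is done, the reduction lands in precisely the sparse $L,C=\Theta(n^\gamma)$ regime targeted by the theorem, and Corollary~\ref{cor:direct} delivers the SETH-based lower bound.
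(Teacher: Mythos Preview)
Your proposal is correct and follows essentially the same approach as the paper: build one two-arc-per-item track per \SSUM instance so that the pair of budget constraints forces the chosen subset to hit $T_i$ exactly, glue all tracks between a common source and sink, and then pad to reach the $L,C=\Theta(n^\gamma)$ regime. The only differences are cosmetic---the paper assigns costs $N^\gamma - z_j$ and $N^\gamma$ to the two arcs and places the offset edge at the sink end, whereas you use costs $0$ and $x^{(i)}_j$ with a per-track offset arc at the source carrying $(L_0-T_i,\,C_0-(S_i-T_i))$; and the paper splits the padding into the cases $\gamma\ge 1$ and $\gamma<1$, whereas you handle both at once by taking $n=\lceil\max(n_0,L_0^{1/\gamma},C_0^{1/\gamma})\rceil$.
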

Intuitively, our reduction shows how a single instance of \bipath can simulate multiple instances of \SSUM and solve the ``Direct-OR" version of it. 

Our second application of Corollary~\ref{cor:direct} concerns the number of different edge-lengths and/or edge-costs in our given input graph.
Let $\lambda$ denote the former parameter, and $\chi$ denote the latter. Note that $\lambda$ and $\chi$ are different from $L$ and $C$, and each can be quite small in comparison to the size of the entire input. In fact, in many of the scheduling applications for \bipath discussed above it is natural to assume that one of these is quite small. We present a SETH-based lower bound that almost matches the $O(n^{\min\{\lambda,\chi\}+2})$ upper bound for the problem.
\begin{theorem}
\label{thm: lambdachi}%
\textsc{Bicriteria $s,t$-Path} can be solved in $O(n^{\min\{\lambda,\chi\}+2})$ time. Moreover, assuming SETH, for any constants $\lambda, \chi \ge 2$ and $\varepsilon>0$, there is no $O(n^{\min\{\lambda,\chi\} - 1 -\eps})$ time algorithm for the problem.
\end{theorem}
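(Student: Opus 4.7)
For the upper bound I would use dynamic programming parameterized by a length profile. Assume WLOG $\lambda \le \chi$, and let $\ell_1,\ldots,\ell_\lambda$ be the distinct edge lengths. For every vertex $v$ and every profile $\vec k = (k_1,\ldots,k_\lambda) \in \{0,\ldots,n\}^\lambda$, let $f(v,\vec k)$ denote the minimum cost of an $s$-$v$ path using exactly $k_i$ edges of length $\ell_i$. I would compute the table in increasing order of $|\vec k| = \sum_i k_i$ via
\[ f(v,\vec k) \;=\; \min_{(u,v)\in E}\; f\bigl(u,\vec k - e_{\iota(u,v)}\bigr) + c(u,v), \]
where $\iota(u,v)$ is the index with $\ell(u,v) = \ell_{\iota(u,v)}$. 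There are $n^\lambda$ profiles and each level sweeps all edges in $O(m) = O(n^2)$ work, for a total of $O(n^{\lambda+2})$. The instance is YES iff some profile $\vec k$ at $v=t$ has $\sum_i k_i\ell_i \le L$ and $f(t,\vec k) \le C$. Exchanging the roles of length and cost yields the symmetric $O(n^{\chi+2})$ bound, hence $O(n^{\min\{\lambda,\chi\}+2})$.

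For the lower bound, set $\mu := \min\{\lambda,\chi\}$. The case $\mu=2$ is immediate (input size), so assume $\mu \ge 3$. Fixing $\eps > 0$, I would pick $\gamma > (\mu-1)(\mu-2)/\eps$ and invoke Corollary~\ref{cor:direct} to start from $N$ \SSUM instances, each on $\delta\log N$ numbers with targets $T_i \le T = O(N^\gamma)$, whose OR cannot be solved in time $O(N^{1+\gamma-\eps'})$ under SETH. Set $B := \lceil N^{\gamma/(\mu-1)}\rceil$, so every target fits into $\mu-1$ base-$B$ digits, and write each relevant number as $x = \sum_{i=0}^{\mu-2} d_i(x)\,B^i$ with $d_i(x)\in[0,B)$. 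The constructed graph will only use the palette $\mathcal P = \{0, B^0, B^1,\ldots,B^{\mu-2}\}$, which has exactly $\mu$ elements, for both edge-lengths and edge-costs.

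The reduction places $N$ parallel branches between a source $s$ and a sink $t$; branch $i$ chains $\delta\log N$ number gadgets followed by a correction gadget. The number gadget for $x$ offers two sub-paths: an \emph{include} sub-path that contains, for each $i \in \{0,\ldots,\mu-2\}$, $d_i(x)$ edges of length $B^i$ and cost $0$ and $B{-}1{-}d_i(x)$ edges of length $0$ and cost $B^i$, plus one length-$0$, cost-$1$ edge, with total length $x$ and total cost $B^{\mu-1}-x$; and an \emph{exclude} sub-path made of $B$ serial length-$0$, cost-$B^{\mu-2}$ edges, with total length $0$ and total cost $B^{\mu-1}$. All edge values lie in $\mathcal P$. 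The correction gadget for branch $i$ is a forced chain realizing length $T-T_i$ and cost $T_i$ by the same digit-wise trick, still using only $\mathcal P$. With global budgets $L = T$ and $C = (\delta\log N)\,B^{\mu-1}$, length-cost complementarity forces any feasible $s$-$t$ path to choose a subset of sum exactly $T_i$ inside some branch. The resulting graph has $n = \tilde O(N\cdot \mu\cdot B) = \tilde O(N^{1+\gamma/(\mu-1)})$ vertices. An $O(n^{\mu-1-\eps})$ algorithm for \bipath would therefore solve the Direct-OR in time $\tilde O\bigl(N^{(\mu-1+\gamma)(\mu-1-\eps)/(\mu-1)}\bigr)$; my choice of $\gamma$ makes this exponent strictly below $1+\gamma$, contradicting Corollary~\ref{cor:direct}.

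The hard part will be the simultaneous length-and-cost encoding: each edge carries one length value \emph{and} one cost value, so the base-$B$ decomposition used for the length of a number and the complementary decomposition used for its cost have to be realized on the same set of edges while every length and every cost remains in the single $\mu$-element palette $\mathcal P$. This couples the two decompositions and rules out any naive unary approach; resolving it forces the careful interleaving of length-carrying and cost-carrying edges sketched above, and the correction gadget must be assembled from exactly the same primitives so that variable targets $T_i$ do not enlarge $\mathcal P$.
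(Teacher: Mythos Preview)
Your upper bound is exactly the paper's argument (Lemma~\ref{lemma: simple xp}): the same DP over length-profiles.

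Your lower bound is correct but takes a genuinely different route from the paper. The paper does \emph{not} go through the Direct-OR corollary here; instead it reduces directly from \kSUM using Theorem~\ref{thm:sethsubsetsum}. Concretely, the paper builds a tiny chain $v_0,\ldots,v_k$ with $n$ parallel edges per step (edge for $z\in Z_i$ has length $z$ and cost $T-z$), and then replaces each edge by a path whose edges carry the base-$B$ digits of its length and cost, with $B=\lceil T^{1/\tau}\rceil$ and $\tau=\mu-1$. This already gives the palette $\{0,B^0,\ldots,B^{\tau-1}\}$ of size $\mu$ and a graph on $O(nB)$ vertices; an $O(|V|^{\mu-1-\eps})$ algorithm then runs in $O(T^{1-\eps/\tau}n^{\tau})$, and choosing $k=\lceil \tau/\delta\rceil$ contradicts Theorem~\ref{thm:sethsubsetsum}. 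You instead start from $N$ \SSUM instances via Corollary~\ref{cor:direct}, place them on $N$ parallel branches, and build per-number include/exclude gadgets plus a per-branch correction gadget, tuning $\gamma$ large to win the exponent race. Both arguments hinge on the same base-$B$ palette trick; what differs is the source of ``parallelism''. In the paper it comes for free from the \kSUM structure (pick one element per set $\Rightarrow$ parallel edges), which avoids your include/exclude and correction gadgets entirely and makes the reduction shorter. Your route, on the other hand, is more uniform with the proof of Theorem~\ref{thm:cspath} and shows that the Direct-OR machinery alone suffices. One small thing to tidy in a full write-up: ensure $T<B^{\mu-1}$ strictly (e.g.\ take $B=\lceil T^{1/(\mu-1)}\rceil+1$) so that every relevant number, including $T-T_i$, really has $\mu-1$ base-$B$ digits.
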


Finally, we consider the case where we are searching for a path that uses only $k$ internal vertices. This parameter is naturally small in comparison to the total input length in several applications of \bipath, for example the packet routing application discussed above. We show that this problem is equivalent to the \kSUM problem, up to logarithmic factors.
For this, we consider an intermediate exact variant of \bipath, the {\sc Zero-Weight-$k$-Path} problem, and utilize the known bounds for this variant to obtain the first improvement over the $O(n^k)$-time brute-force algorithm, as well as a matching lower bound.
\begin{theorem}
\label{thm: k}
\textsc{Bicriteria $s,t$-Path} can be solved in $\tilde{O}(n^{\lceil(k+1)/2\rceil})$ time. Moreover, for any $\varepsilon>0$, there is no $\tilde{O}(n^{\lceil(k+1)/2\rceil - \varepsilon})$-time algorithm for the problem, unless \kSUM has an $\tilde{O}(n^{\lceil k/2\rceil - \varepsilon})$-time algorithm.
\end{theorem}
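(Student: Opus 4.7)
The plan is to identify Bicriteria $s,t$-Path with $k$ internal vertices as a fixed-endpoint, $(k+2)$-vertex instance of an exact-weight path problem, and to treat it through the intermediate \textsc{Zero-Weight-$k$-Path} problem whose tight $\tilde{O}(n^{\lceil k/2\rceil})$ bound is already matched by \kSUM in the literature. Both the upper and the lower bound then follow in the same spirit as the known results, but adapted to fixed endpoints and two weight coordinates.

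\textbf{Upper bound.} I apply meet-in-the-middle directly in two dimensions. Split the $(k+2)$-vertex $s$--$t$ path at the middle: for odd $k$ at the middle internal vertex $v_{(k+1)/2}$ ($n$ choices), for even $k$ at the middle edge $(v_{k/2},v_{k/2+1})$ ($O(n^2)$ choices). For each choice of split, enumerate the first-half and second-half sub-paths with their $(\ell,c)$-pairs: since one endpoint of each sub-path is pinned, there are $\tilde{O}(n^{(k-1)/2})$ half-paths per side when $k$ is odd, and $\tilde{O}(n^{k/2-1})$ when $k$ is even. Combine by building a $2$-dimensional range (dominance) structure on one side and querying from the other, which decides in $\tilde{O}(1)$ per query whether any pair of halves respects both budgets $L$ and $C$. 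Summing over all splits gives the claimed $\tilde{O}(n^{\lceil (k+1)/2\rceil})$ bound. Simplicity of the path can be enforced by standard color-coding at a $2^{O(k)}$ overhead, which is absorbed by $\tilde{O}$.

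\textbf{Lower bound.} For the reduction from \kSUM, take an instance $a_1,\ldots,a_n$ with target $T$ and build a layered graph with apex $s$, apex $t$, and $k$ intermediate layers $V_1,\ldots,V_k$, each a fresh copy of $\{a_1,\ldots,a_n\}$. Connect $s$ to all of $V_1$, $V_k$ to $t$, and consecutive layers by a complete bipartite graph. Assign the edge entering the copy of $a_j$ length $a_j + M$ (shifted non-negative) and cost $0$, and set $L := T + kM$, $C := 0$. Every $s$--$t$ path uses exactly $k$ internal vertices, one per layer, and its length equals $kM$ plus the sum of the chosen numbers, so the Bicriteria instance is YES iff the original \kSUM instance is. Since the graph has $\Theta(n)$ vertices, any $O(n^{\lceil(k+1)/2\rceil - \varepsilon})$-time algorithm for Bicriteria transports to a \kSUM algorithm of the same running time; for odd $k$ this is already exactly $O(n^{\lceil k/2\rceil - \varepsilon})$ and refutes the \kSUM conjecture.

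\textbf{Main obstacle.} The delicate case is even $k$, where $\lceil(k+1)/2\rceil - \lceil k/2\rceil = 1$ and the naive layered reduction is off by a full factor of $n$ from the \kSUM conjecture. I would close this by routing the lower bound through the intermediate \textsc{Zero-Weight-$k$-Path} problem rather than directly through \kSUM, invoking the parity-insensitive \kSUM/Zero-Weight correspondence of Abboud--Lewi--Williams~\cite{AbboudLW2014}: the rogue $\Theta(n^2)$ middle-edge factor is already absorbed on the \kSUM side of that reduction. The central technical point is to verify that the fixed endpoints $s,t$ and the two-criteria structure of Bicriteria can simulate, and be simulated by, Zero-Weight-$k$-Path with only a polylog blow-up in graph size, so that the known reduction plugs in cleanly and matches the parity gap.
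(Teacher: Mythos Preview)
Your upper bound takes a different route than the paper. The paper reduces Bicriteria $s,t$-Path to \textsc{Exact $k$-Path} in three steps (color coding; an inequality-to-equality bit-scaling lemma producing $O(k^2\log^2 W)$ exact targets; a 2D-to-1D weight encoding), and then quotes the known $\tilde O(n^{\lceil(k+1)/2\rceil})$ algorithm for \textsc{Exact $k$-Path} from~\cite{AL13}. Your direct meet-in-the-middle with 2D dominance queries is more elementary and, with the color-coding you mention, gives the same bound. The paper's detour has the side benefit of establishing a two-way polylog equivalence with \textsc{Exact $k$-Path}, which is exactly what the lower bound needs.

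Your lower bound, however, has a real gap. In your layered construction every edge has cost $0$ and you set $C=0$, so the cost constraint is vacuous; the only active constraint is $\ell(P)\le L=T+kM$, i.e.\ the chosen numbers sum to \emph{at most} $T$. That is not \kSUM: a YES on your Bicriteria instance does not imply a YES on the \kSUM instance (any $k$-tuple summing to $T-1$ already makes your instance YES). You never used the second criterion, and without it a Bicriteria inequality cannot encode an exact target. The missing idea is the complementary-weights trick: give the edge associated with value $x$ length $x$ and cost $W-x$, and set $L=T$, $C=(k-1)W-T$. Then $\ell(P)\le T$ together with $c(P)=(k-1)W-\ell(P)\le (k-1)W-T$ forces $\ell(P)=T$ exactly. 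This is precisely how the paper reduces \textsc{Exact $k$-Path} to Bicriteria $s,t$-Path (after color-coding the \textsc{Exact $k$-Path} instance and adding $s,t$). Once that reduction is in place, your ``even $k$'' obstacle evaporates: the parity-sensitive gap between $n^{\lceil(k+1)/2\rceil}$ and $n^{\lceil k/2\rceil}$ is already absorbed inside the known \kSUM $\leftrightarrow$ \textsc{Exact $k$-Path} reductions of~\cite{AL13}, which the paper simply cites. Your ``Main obstacle'' paragraph points at that literature but never supplies the complementary-weights step, and that step is the entire content of the lower-bound argument here.
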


\section{Preliminaries}
\label{sec: preliminaries}


For a fixed integer $p$, we let $[p]$ denote the set of integers $\{1,\ldots,p\}$. All graphs in this paper are, unless otherwise stated, simple, directed, and without self-loops. We use standard graph theoretic notation, \emph{e.g.}, for a graph $G$ we let $V(G)$ and $E(G)$ denote the set of vertices and edges of $G$, respectively.
Throughout the paper, we use the $O^*(\cdot)$ and $\tilde{O}(\cdot)$ notations to suppress polynomial and logarithmic factors.

\paragraph{Hardness Assumptions:} The Exponential Time Hypothesis (ETH) and its strong variant (SETH) are conjectures about running time of any algorithm for the \kSAT problem: Given a boolean CNF formula $\phi$, where each clause has at most $k$ literals, determine whether $\phi$ has a satisfying assignment. Let $s_k = \inf\{ \delta : \text{\kSAT can be solved in } O^*(2^{\delta n}) \text{ time} \}$. The Exponential Time Hypothesis, as stated by Impagliazzo, Paturi and Zane~\cite{IPZ2001}, is the conjecture that $s_3 > 0$. It is known that $s_3 > 0$ if and only if there is a $k \geq 3$ such that $s_k > 0$~\cite{IPZ2001}, and that if ETH is true, the sequence $\{s_k\}^\infty_{k=1}$ increases infinitely often~\cite{IP2001}. The Strong Exponential Time Hypothesis, coined by Impagliazzo and Paturi~\cite{CalabroIP09,IP2001}, is the conjecture that $\lim_{k \to \infty} s_k = 1$. In our terms, this can be stated in the following more convenient manner:
\begin{conjecture}
\label{conj:SETH}%
For any $\eps > 0$ there exists $k \ge 3$ such that \kSAT on $n$ variables cannot be solved in time $O(2^{(1-\eps)n})$.
\end{conjecture}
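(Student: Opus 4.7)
The statement above is the Strong Exponential Time Hypothesis, presented as a conjecture rather than a theorem, and no proof is known; indeed, any proof would be a major breakthrough in complexity theory. A realistic ``proof proposal'' is therefore unavoidably speculative, so I will outline what such a proof would need to accomplish and where I expect any attempt to stall.

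To establish SETH one must exhibit, for every $\eps>0$, some $k\ge 3$ and an infinite family of $k$-CNF formulas on which every algorithm requires $\Omega(2^{(1-\eps)n})$ time. A natural plan is to show that the current best \kSAT algorithms, whose running times approach $2^n$ as $k$ grows, are asymptotically optimal: first establish a matching lower bound against the algorithmic paradigms underlying PPSZ-style random restrictions and local search (for instance, via a second-moment or entropy-compression argument showing that critical clause widths force near-brute-force case analysis), and then lift the bound to arbitrary algorithms via a simulation or Kolmogorov-style compression argument. A more ambitious alternative is to prove a generic hardness-amplification theorem: any constant saving in the exponent for one NP-complete problem propagates, through sparsification and tight reductions, to savings for \kSAT for every sufficiently large~$k$, so that refuting SETH would trigger a collapse in the fine-grained complexity of many natural problems and could conceivably be ruled out by a strong-enough downward self-reduction.

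The main obstacle is severe. Proving SETH implies $\mathrm{P}\ne\mathrm{NP}$ and, more strongly, nontrivial circuit and nondeterministic lower bounds — targets against which all three classical barriers (relativization, algebrization, and natural proofs) apply. Existing unconditional lower bounds for satisfiability-like problems do not exceed modest polynomial factors, and no known technique yields an exponential bound of the form $2^{(1-\eps)n}$, let alone one whose base approaches $2$ as the clause width grows. I therefore do not expect to prove SETH; following standard practice in fine-grained complexity, I would adopt it as a working hypothesis and use it to derive the conditional consequences established in the preceding theorems of the paper, in particular Theorem~\ref{thm:sethsubsetsum} and its corollaries.
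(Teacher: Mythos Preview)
Your assessment is correct: the statement is the Strong Exponential Time Hypothesis, stated in the paper as a conjecture and used throughout as an assumption, not as a theorem with a proof. The paper offers no proof of it, and your decision to treat it as a working hypothesis from which the paper's conditional results (Theorem~\ref{thm:sethsubsetsum}, Corollary~\ref{cor:direct}, etc.) are derived is exactly what the paper does.
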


We use the following standard tool by Impagliazzo, Paturi and Zane:
\begin{lemma}[Sparsification Lemma~\cite{IPZ2001}]
\label{lem:sparsification}%
For any $\eps > 0$ and $k \ge 3$, there exists $c_{k,\eps} > 0$ and an algorithm that, given a \kSAT instance $\phi$ on $n$ variables, computes \kSAT instances $\phi_1, \ldots, \phi_\ell$ with $\ell \le 2^{\eps n}$ such that $\phi$ is satisfiable if and only if at least one $\phi_i$ is satisfiable. Moreover, each $\phi_i$ has $n$ variables, each variable in $\phi_i$ appears in at most $c_{k,\eps}$ clauses, and the algorithm runs in time $\poly(n) 2^{\eps n}$.
\end{lemma}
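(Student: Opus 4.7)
The plan is to prove the Sparsification Lemma by a branching argument that repeatedly eliminates ``heavy'' sub-clauses from $\phi$. The intuition is that if some short set of literals $S$ is contained as a sub-clause in unusually many clauses of $\phi$, then case-splitting on $S$ is cheap: in one branch we set all literals of $S$ to false and simultaneously shorten every clause that contained $S$, and in the other branch we assert $\bigvee S$ as a new clause, which lets us delete every clause containing $S$ entirely. Either way, a single branching step produces a large amount of structural simplification per unit of case-splitting.

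Concretely, I would fix a threshold $\tau = \tau(k,\eps)$, to be tuned at the end, and iterate the following. Processing widths $j = k, k-1, \ldots, 2$ in decreasing order, while there exists a set $S$ of $j$ literals contained in at least $\tau$ clauses of the current formula, branch on $S$ into two sub-instances. In the ``false-branch'' we append for each $\ell \in S$ the unit clause $\overline{\ell}$, forcing every clause containing $S$ to lose those $j$ literals and hence strictly shortening them. In the ``true-branch'' we append the single clause $\bigvee S$, making every clause containing $S$ redundant so that it may be deleted. We recurse on each branch until no subclause of size between $2$ and $k$ has frequency at least $\tau$, and output the leaves as the family $\phi_1,\ldots,\phi_\ell$. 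Correctness, namely that $\phi$ is satisfiable iff some leaf is satisfiable, is immediate from the branching semantics.

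The core technical step, and the main obstacle I expect, is bounding the number of leaves by $2^{\eps n}$. For this I would introduce a potential $\Phi(\phi)$ on formulas so that both branches decrease $\Phi$ by quantifiable amounts and so that $\Phi$ is initially $O(n)$. A natural candidate is a weighted count of clause-subset incidences, $\Phi(\phi) = \sum_{S} w_{|S|} \cdot \mathrm{occ}(S,\phi)$, with positive weights $w_j$ depending only on $k$, where $\mathrm{occ}(S,\phi)$ is the number of clauses of $\phi$ that contain $S$. The weights must be chosen so that branching on a heavy width-$j$ subclause causes a strict drop of at least some explicit function of $\tau$ and $k$ in both children; a careful inductive argument, together with choosing $\tau$ large in terms of $k/\eps$, then yields a branching recurrence whose leaf count is at most $2^{\eps n}$. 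Balancing the $w_j$ so that every branching step makes uniform progress, and handling the interplay between shortening clauses and deleting them, is the delicate combinatorial part; the remainder is bookkeeping.

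Finally, I would deduce sparsity and the runtime. At any leaf, every subclause of width in $[2,k]$ has frequency less than $\tau$; a pigeonhole argument then bounds the number of clauses in which any fixed variable appears by a constant $c_{k,\eps}$ depending only on $k$ and $\tau$. Each recursive step scans $\phi$ for a heavy subclause in $\poly(n)$ time, the recursion tree has at most $2^{\eps n}$ leaves and polynomial depth, and so the total running time is $\poly(n)\cdot 2^{\eps n}$, exactly as claimed in the lemma statement.
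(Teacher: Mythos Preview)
The paper does not prove this lemma; it is quoted from Impagliazzo--Paturi--Zane and used as a black box in the proof of Lemma~\ref{lem:csp}, so there is no proof here to compare against.

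Your sketch is in the spirit of the original argument, but there is a concrete gap in the final sparsity step. You assert that once no set $S$ with $2 \le |S| \le k$ is contained in $\ge \tau$ clauses, a pigeonhole argument bounds the degree of every variable by a constant in $k,\tau$. This is false: for $k \ge 3$, the formula $\bigwedge_{i=1}^{m}(x \vee y_i \vee z_i)$ with all $y_i,z_i$ distinct literals on distinct variables has every $2$- and $3$-literal set contained in at most one clause, so it is already a leaf of your procedure for any $\tau \ge 2$, yet $x$ appears in $m = \Theta(n)$ clauses. The Impagliazzo--Paturi--Zane proof must therefore also branch when a \emph{single} literal is heavy; what keeps the leaf count under control in that regime is a sunflower step: among the many clauses through a heavy literal $\ell$ one extracts a large sunflower with core $H \ni \ell$, and it is the pairwise disjointness of the petals (not merely $|H| \ge 2$) that drives the potential down even when $H=\{\ell\}$ and only one variable is fixed. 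A smaller issue is your claim that $\Phi$ is initially $O(n)$: a $k$-CNF on $n$ variables can have up to $\Theta(n^k)$ distinct clauses, so $\sum_S \mathrm{occ}(S,\phi)$ is in general only polynomial in $n$, which also needs to be handled before the potential argument can yield a $2^{\eps n}$ leaf bound.
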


\paragraph{The \kSUM Problem:} In \kSUM we are given sets $Z_1,\ldots,Z_k$ of non-negative integers and a target~$T$, and we want to decide whether there are $z_1 \in Z_1, \ldots, z_k \in Z_k$ such that $z_1 + \ldots + z_k = T$. This problem can be solved in time $O(n^{\lceil k/2 \rceil})$~\cite{HS74}, and it is somewhat standard by now to assume that this is essentially the best possible~\cite{AL13}. This assumption, which generalizes the more popular assumption of the $k=3$ case~\cite{GO95,Pat10}, remains believable despite recent algorithmic progress~\cite{Austrin13,BDP05,CL15,GP14,Wang14}.
\begin{conjecture}
\label{conj:ksum}%
\kSUM cannot be solved in time $\tilde O(n^{\lceil k/2 \rceil - \eps})$ for any $\eps > 0$ and $k \ge 3$. 
\end{conjecture}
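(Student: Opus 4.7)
The statement in question is Conjecture \ref{conj:ksum}, which asserts that \kSUM requires $\tilde{\Omega}(n^{\lceil k/2\rceil})$ time for every $k \ge 3$ and $\eps > 0$. Since this is posited as a hypothesis rather than a theorem, any ``proof plan'' can only be a program for \emph{conditional} evidence: an unconditional proof would in particular imply $\mathsf{P} \neq \mathsf{NP}$ (via the $k = \Theta(n)$ regime) and indeed much more, so the realistic goal is to base the conjecture on a more primitive hardness assumption, or at the very least to collect the structural evidence that makes it plausible.

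My plan is therefore threefold. First, I would argue matching upper-bound tightness: the meet-in-the-middle algorithm of Horowitz and Sahni achieves exactly $O(n^{\lceil k/2\rceil})$, so the exponent in the conjecture is the correct target. I would then survey the algorithmic landscape showing that, despite considerable effort (e.g., the works of Baran--Demaine--P\u{a}tra\c{s}cu, Gr\o{}nlund--Pettie, Chan--Lewenstein, Wang, and Austrin--Kaski--Koivisto--Nederlof cited in the preliminaries), no subpolynomial-factor improvement has been obtained beyond $k=3$, and even for $k=3$ the best known savings are logarithmic. Second, I would aim to reduce from an established conjecture. The natural candidate is the $3$-\textsc{Sum} conjecture (the $k=3$ case, itself folklore since Gajentaan--Overmars); one would seek a self-reduction from $3$-\textsc{Sum} on $n$ numbers to $k$-\textsc{Sum} on $n^{3/\lceil k/2\rceil}$ numbers, so that an $\tilde{O}(n^{\lceil k/2\rceil - \eps})$ algorithm for $k$-\textsc{Sum} would yield a sub-quadratic $3$-\textsc{Sum} algorithm. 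A weaker but also valuable target would be to derive the conjecture from SETH in the same spirit as Theorem~\ref{thm:sethsubsetsum}, trying to push the $2^{\delta n}$ dependence in that reduction down to $n^{\delta k}$ uniformly in $k$.

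Third, I would catalog the barriers, which is where I expect the actual difficulty to lie. The main obstacle is that tight lower bounds at the exponent $\lceil k/2\rceil$ appear to be out of reach of existing techniques: the \emph{Nondeterministic SETH} of Carmosino et al.\ explicitly rules out certain natural reductions from SAT that would establish an $n^{\lceil k/2\rceil - o(1)}$ bound, and no known self-reduction from $3$-\textsc{Sum} preserves the exponent $\lceil k/2\rceil$ for even $k$ (the meet-in-the-middle savings are fundamentally asymmetric in~$k$). Accordingly, the honest status is that Conjecture~\ref{conj:ksum} must be adopted as an axiom of fine-grained complexity, justified by (i)~the tightness of Horowitz--Sahni, (ii)~the failure of decades of algorithmic attacks, (iii)~the $k=3$ base case's own standing as a cornerstone conjecture, and (iv)~the rich web of tight conditional lower bounds (including those for \bipath in Theorem~\ref{thm: k}) that would collapse were it false. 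My ``proof proposal'' is thus to record these four strands of evidence, and flag the reduction from $3$-\textsc{Sum} to $k$-\textsc{Sum} preserving the exponent $\lceil k/2\rceil$ as the central open problem whose resolution would upgrade the conjecture to a conditional theorem.
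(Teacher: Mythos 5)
This statement is a hypothesis, not a theorem: the paper offers no proof of Conjecture~\ref{conj:ksum}, but simply adopts it as a hardness assumption, citing the matching $O(n^{\lceil k/2\rceil})$ meet-in-the-middle upper bound and the literature treating it as a standard generalization of the $3$-SUM conjecture. Your proposal correctly recognizes this and takes essentially the same stance as the paper --- treating the conjecture as an axiom supported by the tight upper bound, the lack of algorithmic progress, and barrier results --- so there is nothing to fault (beyond the loose aside that an unconditional proof would yield $\mathsf{P}\neq\mathsf{NP}$, which does not follow for fixed $k$).
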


\section{From SAT to Subset Sum}
\label{sec: sethsubsetsum}

In this section we present our main result, the hardness of \SSUM and \kSUM under SETH. Our reduction goes through three main steps: We start with a \kSAT formula $\phi$ that is the input to our reduction. This formula is then reduced to subexponentially many Constraint Satisfaction Problems (CSP) with a restricted structure. The main technical part is then to reduce these CSP instances to equivalent \SSUM instances. The last part of our construction, reducing \SSUM to \kSUM, is rather standard. In the final part of the section we provide a proof for Corollary~\ref{cor:direct}, showing that \SSUM admits the ``Direct-OR" property discussed in Section~\ref{subseq:booleansum}.

\subsection{From \boldmath{$k$}-SAT to Structured CSP}

We first present a reduction from \kSAT to certain structured instances of Constraint Satisfaction Problems (CSP). This is a standard combination of the Sparsification Lemma with well-known tricks.

\begin{lemma} \label{lem:csp}
Given a \kSAT instance $\phi$ on $n$ variables and $m$ clauses, for any $\eps > 0$ and $a \ge 1$ in time $\poly(n) 2^{\eps n}$ we can compute CSP instances $\psi_1,\ldots,\psi_\ell$, with $\ell \le 2^{\eps n}$, such that $\phi$ is satisfiable if and only if some $\psi_i$ is satisfiable. Each $\psi_i$ has $\hat n = \lceil n/a \rceil$ variables over universe $[2^a]$ and $\hat m = \lceil n/a \rceil$ constraints. Each variable is contained in at most $\hat c_{k,\eps} \cdot a$ constraints, and each constraint contains at most $\hat c_{k,\eps} \cdot a$ variables, for some constant $\hat c_{k,\eps}$ depending only on $k$ and $\eps$.
\end{lemma}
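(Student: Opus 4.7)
\medskip

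\noindent\textbf{Proof proposal.} The plan is to glue together the Sparsification Lemma with a standard ``grouping'' trick that bundles the Boolean variables into blocks of size $a$ and also partitions the clauses into $\lceil n/a \rceil$ equal-sized blocks.

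\emph{Step 1 (Sparsify).} First I would apply Lemma~\ref{lem:sparsification} to $\phi$ with error parameter $\eps$, producing $\ell\le 2^{\eps n}$ formulas $\phi_1,\dots,\phi_\ell$ in time $\poly(n)\,2^{\eps n}$, each on $n$ variables, in which every variable occurs in at most $c_{k,\eps}$ clauses, and such that $\phi$ is satisfiable iff some $\phi_i$ is. Double-counting variable-occurrences gives that each $\phi_i$ has at most $m_i\le c_{k,\eps}\,n/k$ clauses. It will therefore suffice to transform a single sparse $\phi_i$ into a single CSP $\psi_i$ of the required shape.

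\emph{Step 2 (Group variables).} Partition the Boolean variables of $\phi_i$ into $\hat n=\lceil n/a\rceil$ contiguous blocks $B_1,\dots,B_{\hat n}$, each of size at most $a$. Introduce one CSP variable $y_j$ per block, ranging over the universe $[2^a]$ and interpreted as an assignment to the Boolean variables in $B_j$ (in the last block, if $|B_{\hat n}|<a$, simply ignore the unused bits — or fold a trivial restriction into the constraints of Step~3).

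\emph{Step 3 (Group clauses).} Partition the $m_i$ clauses of $\phi_i$ into $\hat m=\lceil n/a\rceil$ groups $G_1,\dots,G_{\hat m}$ as evenly as possible, padding with empty (trivially satisfied) groups if $m_i<\hat m$. Since $m_i\le c_{k,\eps}n/k$ and $\hat m\ge n/a$, each group contains at most $\lceil c_{k,\eps}a/k\rceil+1$ clauses. Define the $j$-th CSP constraint $\hat C_j$ to be the conjunction of all clauses in $G_j$, re-expressed as a predicate over the $y$'s by locating each Boolean variable inside its block.

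\emph{Step 4 (Degree bounds).} Each $\hat C_j$ is the conjunction of $O(a)$ clauses of width $k$; the Boolean variables involved lie in at most $k\cdot(\lceil c_{k,\eps}a/k\rceil+1)$ blocks, so $\hat C_j$ touches at most that many $y$'s. Conversely, a meta-variable $y_j$ corresponds to $a$ Boolean variables, each appearing in at most $c_{k,\eps}$ clauses, so the clauses mentioning $y_j$ — and hence the constraints mentioning $y_j$ — number at most $a\,c_{k,\eps}$. Choosing $\hat c_{k,\eps}:=c_{k,\eps}\cdot(k+1)$ validates both bounds simultaneously.

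\emph{Step 5 (Correctness and running time).} Since $\psi_i$ is a literal re-encoding of $\phi_i$, a satisfying assignment of $\psi_i$ corresponds bijectively to one of $\phi_i$. Combined with Step~1, some $\psi_i$ is satisfiable iff $\phi$ is. The sparsification dominates the running time; the per-formula re-encoding is polynomial, so in total we spend $\poly(n)\,2^{\eps n}$ time and produce $\ell\le 2^{\eps n}$ instances of the stated form.

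\emph{Main obstacle.} There is no deep difficulty here — everything is standard — but the small piece of care required is the double-degree bookkeeping of Step~4: we must simultaneously control the constraint-arity and the per-variable degree, which is why the clauses must be grouped \emph{evenly} (not arbitrarily) and why the sparsification is essential (without an $O(1)$ bound on variable occurrences, partitioning the clauses into $\hat m$ blocks would not yield $O(a)$ clauses per block). The boundary cases — a short last block and $m_i<\hat m$ — are handled by trivial padding.
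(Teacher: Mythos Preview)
Your proposal is correct and follows essentially the same approach as the paper: sparsify, then group variables into blocks of size $a$ and group clauses into $\lceil n/a\rceil$ super-constraints. One small slip: the double-counting in Step~1 only gives $m_i\le c_{k,\eps}n$ (each clause has at \emph{least} one literal, not at least $k$), so the paper takes $\gamma=a\,c_{k,\eps}$ clauses per super-constraint and sets $\hat c_{k,\eps}=k\,c_{k,\eps}$; your Step~3--4 bounds adjust accordingly with no change to the argument.
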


\begin{proof}
Let $\phi$ be an instance of \kSAT with $n$ variables and $m$ clauses. We start by invoking the Sparsification Lemma (Lemma~\ref{lem:sparsification}). This yields \kSAT instances $\phi_1,\ldots,\phi_\ell$ with $\ell \le 2^{\eps n}$ such that $\phi$ is satisfiable if and only if some $\phi_i$ is satisfiable, and where each $\phi_i$ has $n$ variables, and each variable in $\phi_i$ appears in at most $c_{k,\eps}$ clauses of $\phi_i$, for some constant $c_{k,\eps}$. In particular, the number of clauses is at most $c_{k,\eps} n$. 


We combine multiple variables to a super-variable and multiple clauses to a super-constraint, which yields a certain structured CSP. Specifically, let $a \ge 1$, and partition the variables into $\lceil n/a \rceil$ blocks of length at most $a$. We replace each block of at most $a$ variables by one super-variable over universe $[2^a]$. Similarly, we partition the clauses into $\lceil n/a \rceil$ blocks, each containing at most $\gamma := a c_{k,\eps}$ clauses. We replace each block of $\gamma' \leq \gamma$ clauses $C_1,\ldots,C_{\gamma'}$ by one super-constraint $C$ that depends on all super-variables containing variables appearing in $C_1,\ldots,C_{\gamma'}$.

Clearly, the resulting CSP $\psi_i$ is equivalent to $\phi_i$.
Since each variable appears in at most $c_{k,\eps}$ clauses in $\phi_i$, and we combine at most $a$ variables to obtain a variable of $\psi_i$, each variable appears in $\psi_i$ in at most $a c_{k,\eps}$ constraints. Similarly, each clause in $\phi_i$ contains at most $k$ variables, and each super-constraint consists of at most $\gamma = a c_{k,\eps}$ clauses, so each super-constraint contains at most $\hat c_{k,\eps} a$ variables for $\hat c_{k,\eps} = k c_{k,\eps}$. This finishes the proof.
\end{proof}

\subsection{From Structured CSP to Subset Sum}

Next we reduce to \SSUM. Specifically, we show the following.

\begin{theorem} \label{thm:redsatsubsetsum}
For any $\eps > 0$, given a \kSAT instance $\phi$ on $n$ variables we can in time $\poly(n) 2^{\eps n}$ construct $2^{\eps n}$ instances of \SSUM on at most $\tilde c_{k,\eps} n$ items and a target value bounded by $2^{(1+2\eps)n}$ such that $\phi$ is satisfiable iff at least one of the \SSUM instances is a YES-instance. Here $\tilde c_{k,\eps}$ is a constant depending only on $k$ and $\eps$.
\end{theorem}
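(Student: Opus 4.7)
The plan is to reduce $\phi$ in two stages: first produce sub-exponentially many structured CSPs via Lemma~\ref{lem:csp}, and then map each CSP to a \SSUM instance through a digit-block encoding whose target length is held below $(1+2\eps)\,n$ bits by exploiting Behrend's dense average-free sets. Concretely, I would invoke Lemma~\ref{lem:csp} with parameter $\eps$ and a block-size $a$ to be fixed as a sufficiently large constant depending on $k$ and $\eps$. This yields $\ell \le 2^{\eps n}$ equivalent CSPs $\psi_1, \ldots, \psi_\ell$, each with $\hat n = \lceil n/a \rceil$ variables over $[2^a]$, $\hat m = \lceil n/a \rceil$ constraints of arity at most $t := \hat c_{k,\eps}\, a = O_\eps(1)$, and at most $2^{at} = O_\eps(1)$ locally-satisfying assignments per constraint.

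For each $\psi_i$ I would build a \SSUM instance with two families of items. The \emph{variable items} $x_{v, \alpha}$, one per pair $(v, \alpha)$, account for $O_\eps(\hat n \cdot 2^a) = O_\eps(n)$ items. The \emph{helper items} $y_{C, \sigma}$, one per constraint $C$ and locally-satisfying $\sigma$, add another $O_\eps(\hat m \cdot 2^{at}) = O_\eps(n)$ items, meeting the bound of $\tilde c_{k,\eps}\, n$ items. Each item value, and the target $T$, is laid out as a concatenation of digit blocks in a base large enough that no block overflows into its neighbors. There are three kinds of blocks: a one-hot block per variable $v$ (each $x_{v,\cdot}$ contributes $1$, target $1$); a one-hot block per constraint $C$ (each $y_{C, \cdot}$ contributes $1$, target $1$); and an \emph{arithmetic consistency} block per constraint $C$ on which the items $\{x_{v, \alpha} : v \in C\}$ together with $y_{C, \sigma}$ contribute values designed so that the block-sum hits its target entry if and only if $\alpha_v = \sigma(v)$ for every $v \in C$.

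The technical crux is packing all $\hat m$ arithmetic consistency blocks within roughly $n$ bits. A naive base-$2^a$ encoding of the local tuple $(\alpha_{v_1}, \ldots, \alpha_{v_t})$ would use $\approx at$ bits per block and $\approx nt$ overall, blowing the target budget by a factor of $t$. To shrink this, I would apply Behrend's construction of a dense 3-AP-free set $B \subset [2^b]$ with $|B| \ge 2^a$ and $b = a\,(1 + o_a(1))$, fix an injection $\beta : [2^a] \to B$, and encode each value $\alpha$ by $\beta(\alpha)$; the helper $y_{C, \sigma}$ is assigned a consistency-block contribution chosen so that any mismatched ``variable tuple $+$ helper'' combination reaching the target translates into a nontrivial 3-term arithmetic progression inside $B$, contradicting Behrend's property. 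This compresses each consistency block from $\approx at$ bits down to $\approx a\,(1 + o_a(1))$ bits, so that summing over all $\hat m = n/a$ such blocks contributes only $(1 + o_a(1))\,n$ bits to the length of $T$. Taking $a$ large enough that $o_a(1) < 2\eps$ then yields $T \le 2^{(1+2\eps)n}$, as required.

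Correctness of the reduction is then routine: a satisfying assignment $\sigma^\ast$ of $\psi_i$ produces a subset summing to $T$ by picking $\{x_{v, \sigma^\ast(v)}\} \cup \{y_{C, \sigma^\ast|_C}\}$, while conversely the one-hot blocks force any witness subset to contain exactly one variable item per variable and one helper per constraint, and the consistency blocks then force the induced assignment to satisfy every constraint of $\psi_i$. The main obstacle I foresee is precisely the Behrend-based compression: pinpointing the specific linear identity at each consistency block whose violation translates into a nontrivial 3-term AP in $B$, so that the block is simultaneously as short as Behrend permits and as rigid as correctness demands. The rest of the argument, including the subsequent reduction from \SSUM to \kSUM, should follow by standard book-keeping.
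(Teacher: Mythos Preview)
Your overall architecture matches the paper's---reduce to bounded-degree CSPs via Lemma~\ref{lem:csp}, encode with digit blocks, use one-hot blocks to force exactly one item per variable and per constraint, and invoke Behrend-type sets to keep the target short---but the place you flag as ``the main obstacle'' is in fact a genuine gap, and the layout you chose makes it unfixable as stated.

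You organise the arithmetic consistency blocks \emph{per constraint}: in the block for $C$ with variables $v_1,\dots,v_t$ you want a single short block in which the contributions of $x_{v_1,\alpha_{v_1}},\dots,x_{v_t,\alpha_{v_t}}$ and $y_{C,\sigma}$ sum to the target entry iff $\alpha_{v_i}=\sigma(v_i)$ for every $i$. No single linear identity over a $3$-AP-free set enforces $t$ independent equalities simultaneously. If the block check reads $\sum_i \beta(\alpha_{v_i})=\sum_i \beta(\sigma(v_i))$, then even a Sidon set only forces equality of \emph{multisets}, not of the ordered tuple; and a $3$-AP-free set does not even give that. If instead you separate the $t$ coordinates by weights (or sub-blocks), each block costs $\Theta(tb)=\Theta(ta)$ bits and the total is $\hat m\cdot ta=\Theta(tn)$, which is exactly the blow-up you set out to avoid. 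So the claimed compression from $\approx at$ to $\approx a(1+o_a(1))$ bits per constraint block does not go through with $3$-AP-free sets, nor with any Behrend-type set under a per-constraint layout.

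The paper resolves this by flipping the organisation: there is one arithmetic block \emph{per variable} $x$, not per constraint, and it uses a $\lambda$-\emph{average}-free set with $\lambda=\hat c_{k,\eps}\,a$ (Lemma~\ref{lem:sumfreeset}), not merely a $3$-AP-free set. In the block for $x$, each of the $d(x)\le\lambda$ constraint items containing $x$ contributes $f(\alpha_j)$, the single variable item contributes $\lambda B-d(x)\cdot f(\alpha)$, and the target entry is $\lambda B$. Hitting the target is then equivalent to $f(\alpha_1)+\dots+f(\alpha_{d(x)})=d(x)\cdot f(\alpha)$, which is precisely the $\lambda$-average-free condition and forces $\alpha_1=\dots=\alpha_{d(x)}=\alpha$. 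Each such block costs $\log B+O(\log\lambda)=(1+\eps)a+O_{k,\eps}(\log a)$ bits, and with $\hat n=\lceil n/a\rceil$ blocks the total is $(1+\eps)n+O_{k,\eps}(n\log a/a)$, which drops below $(1+2\eps)n$ for $a$ large. Note that this hinges on the \emph{degree} bound (each variable in at most $\lambda$ constraints) from Lemma~\ref{lem:csp}; your write-up records only the arity bound and never uses the degree bound, which is the parameter the average-free set must match.
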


As discussed in Section~\ref{subseq:main}, our reduction crucially relies on a construction of average-free sets.
For any $k \ge 2$, a set $S$ of integers is \emph{$k$-average-free} iff for all $k'\leq k$ and (not necessarily distinct) $x_1,\ldots,x_{k'+1} \in S$ with $x_1 + \ldots + x_{k'} = k' \cdot x_{k'+1}$ we have $x_1 = \ldots = x_{k'+1}$. A surprising construction by Behrend~\cite{Behrend1946} has been slightly adapted in~\cite{AbboudLW2014}, showing the following.
\begin{lemma}
\label{lem:sumfreeset}%
There exists a universal constant $c > 0$ such that, given $\eps \in (0,1)$, $k \ge 2$, and $n \ge 1$, a $k$-average-free set $S$ of size $n$ with $S \subset [0,k^{c/\eps} n^{1+\eps}]$ can be constructed in $\poly(n)$ time.
\end{lemma}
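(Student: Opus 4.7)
The plan is to adapt Behrend's classical $3$-AP-free construction~\cite{Behrend1946} to the stronger $k$-average-free regime, essentially in the manner of~\cite{AbboudLW2014}. The idea is to encode integers as digit vectors in a high-dimensional box, restrict to those lying on a common Euclidean sphere, and choose the base of representation large enough that sums of up to $k$ such integers produce no carries between coordinates.

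Concretely, I will fix parameters $d$ (the dimension) and $R$ (the per-coordinate range), set the base $B := 2kR$, and consider the injection $\phi \colon [0,R)^d \cap \mathbb{Z}^d \to \mathbb{Z}$ given by $\phi(\mathbf{v}) = \sum_{i=1}^d v_i B^{i-1}$. For each squared norm $r \in \{0,1,\ldots,dR^2\}$, define the bucket $S_r := \phi(\{\mathbf{v} \in [0,R)^d \cap \mathbb{Z}^d : \|\mathbf{v}\|_2^2 = r\})$. Since the box contains $R^d$ lattice points distributed among at most $dR^2$ buckets, by pigeonhole some $|S_r| \ge R^{d-2}/d$; the final set $S$ consists of $n$ arbitrary elements of such a largest bucket.

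Correctness has two ingredients. First, for any $k' \le k$ and any $x_1,\ldots,x_{k'+1} \in S$ with preimages $\mathbf{v}_1,\ldots,\mathbf{v}_{k'+1} \in [0,R)^d$, the bound $B > k'R$ guarantees that each coordinate of $\mathbf{v}_1 + \cdots + \mathbf{v}_{k'}$ lies in $[0,B)$, so no carries occur in base $B$; thus the equation $x_1 + \cdots + x_{k'} = k' x_{k'+1}$ in $\mathbb{Z}$ is equivalent to $\mathbf{v}_1 + \cdots + \mathbf{v}_{k'} = k' \mathbf{v}_{k'+1}$ in $\mathbb{Z}^d$. Second, all $\mathbf{v}_j$ lie on the sphere $\|\mathbf{v}\|^2 = r$, and the inequality $2\mathbf{v}_i \cdot \mathbf{v}_j \le \|\mathbf{v}_i\|^2 + \|\mathbf{v}_j\|^2$ (with equality iff $\mathbf{v}_i = \mathbf{v}_j$, by Cauchy--Schwarz for vectors of equal norm) yields
\[
  \Bigl\|\tfrac{1}{k'}\sum_{j=1}^{k'} \mathbf{v}_j\Bigr\|^2 \;=\; \tfrac{1}{k'^2}\Bigl(k' r + \sum_{i \ne j} \mathbf{v}_i\cdot\mathbf{v}_j\Bigr) \;\le\; r,
\]
with equality iff all the $\mathbf{v}_j$ coincide. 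Since the left-hand side must equal $\|\mathbf{v}_{k'+1}\|^2 = r$, equality holds, so $\mathbf{v}_1=\cdots=\mathbf{v}_{k'}=\mathbf{v}_{k'+1}$ and hence $x_1=\cdots=x_{k'+1}$, verifying $k$-average-freeness.

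For the quantitative bound I choose $d := \lceil 2/\eps \rceil + 2$, which satisfies $d/(d-2) \le 1+\eps$, and the smallest integer $R$ with $R^{d-2}/d \ge n$, so that $R = O((nd)^{1/(d-2)})$. Then
\[
 \max S \;\le\; B^d \;=\; (2kR)^d \;\le\; (2k)^d \cdot (nd)^{d/(d-2)} \;\le\; k^{O(1/\eps)} \cdot n^{1+\eps},
\]
after absorbing the constant $d^{1+\eps}$ factor by a harmless rescaling of $\eps$; this is bounded by $k^{c/\eps} n^{1+\eps}$ for a universal constant $c$. The construction itself enumerates the $R^d = O(n^{1+\eps})$ lattice points, buckets them by squared norm in linear time per point, and outputs $\phi$-images of $n$ points from the largest bucket, running in $\poly(n)$ time overall. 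The main quantitative subtlety is that in order to kill carries in $k$-term sums (rather than merely $3$-term progressions, as in Behrend's original argument) the base must grow linearly in $k$, and it is precisely this that produces the $k^{c/\eps}$ factor in the range of $S$; all other estimates are routine pigeonhole and spherical geometry.
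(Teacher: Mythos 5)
Your construction is correct and is precisely the standard Behrend-style argument (base-$B=2kR$ digit encoding so that $k$-term sums produce no carries, pigeonhole onto a sphere of fixed squared norm, and strict convexity of the Euclidean norm to force all vectors to coincide) that the paper itself does not reprove but imports from Behrend~\cite{Behrend1946} as adapted in~\cite{AbboudLW2014}, so you have in effect supplied the omitted proof along the intended lines. One cosmetic remark: the leftover factor $d^{1+\eps}$ (and the $4^d$ from rounding $R$) is more safely absorbed into the $k^{c/\eps}$ term using $k \ge 2$, since ``rescaling $\eps$'' to hide a constant inside $n^{1+\eps}$ is awkward for small $n$; this is immediate and does not affect correctness.
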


While it seems natural to use this lemma when working with an additive problem like \SSUM, we are only aware of very few uses of this result in conditional lower bounds~\cite{AbboudLW2014,DBLP:journals/siamcomp/FominGLS14,DBLP:journals/jcss/JansenKMS13}.
One example is a reduction from \textsc{$k$-Clique} to $k^2$-\textsc{Sum} on numbers in $n^{k+o(1)}$ \cite{AbboudLW2014}.
Our result can be viewed as a significant boosting of this reduction, where we exploit the power of \SSUM further.
Morally, \textsc{$k$-Clique} is like \textsc{Max-$2$-SAT}, since faster algorithms for \textsc{$k$-Clique} imply faster algorithms for  \textsc{Max-$2$-SAT}~\cite{Wil05}.
We show that even  \textsc{Max-$d$-SAT}, for any $d$, can be reduced to \kSUM, which corresponds to a reduction from \textsc{Clique} on \emph{hyper-graphs} to \kSUM.

\begin{proof}[Proof of Theorem~\ref{thm:redsatsubsetsum}]
We let $a \ge 1$ be a sufficiently large constant depending only on $k$ and $\eps$.
We need a $\lambda$-average-free set, with $\lambda := \hat c_{k,\eps} a$, where $\hat c_{k,\eps}$ is the constant from Lemma~\ref{lem:csp}.
Lemma~\ref{lem:sumfreeset} yields a $\lambda$-average-free set $S$ of size $2^a$ consisting of non-negative integers bounded by $B := \lambda^{c/\eps} (2^a)^{1+\eps}$, for some universal constant $c>0$.
We let $f \colon [2^a] \to S$ be any injective function. Note that since $a$ and $B$ are constants constructing $f$ takes constant time.

Run Lemma~\ref{lem:csp} to obtain CSP instances $\psi_1, \ldots, \psi_\ell$ with $\ell \le 2^{\eps n}$, each with $\hat n = \lceil n /a \rceil$ variables over universe $[2^a]$ and $\hat m = \hat n$ constraints, such that each variable is contained in at most $\lambda$ constraints and each constraint contains at most $\lambda$ variables.
Fix a CSP $\psi = \psi_i$.
We create an instance $(Z,T)$ of \SSUM, i.e., a set $Z$ of positive integers and a target value $T$.
We define these integers by describing blocks of their bits, from highest to lowest. (The items in $Z$ are naturally partitioned, as for each variable $x$ of $\psi$ there will be $O_{k,\eps}(1)$ items \emph{of type~$x$}, and for each clause $C$ of $\psi$  there will be $O_{k,\eps}(1)$ items \emph{of type $C$}.)

We first ensure that any correct solution picks exactly one item of each type. To this end, we start with a block of $O(\log \hat n)$ bits where each item has value 1, and the target value is $\hat n + \hat m$, which ensures that we pick exactly $\hat n + \hat m$ items.
This is followed by $O(\log \hat n)$ many 0-bits to avoid overflow from the lower bits (we will have $O_{k,\eps}(\hat n)$ items overall). In the following $\hat n + \hat m$ bits, each position is associated to one type, and each item of that type has a 1 at this position and 0s at all other positions. The target $T$ has all these bits set to 1. Together, these $O(\log \hat n) + \hat n + \hat m$ bits ensure that we pick exactly one item of each type (since choosing any duplicate type among the $\hat n + \hat m$ picked items leads to two bits that cancel, leaving just one carry bit, and therefore we could not cover all $\hat n + \hat m$ bits in the target $T$). We again add $O(\log \hat n)$ many 0-bits to avoid overflow from the lower bits.

The remaining $\hat n$ blocks of bits correspond to the variables of $\psi$.
For each variable we have a block consisting of $\lceil \log (2 \lambda B+1) \rceil = \log B + \log \lambda + O(1)$ bits. The target number $T$ has bits forming the number $\lambda B$ in each block of each variable.

Now we describe the items of type $x$, where $x$ is a variable. For each assignment $\alpha \in [2^a]$ of $x$, there is an item $z(x,\alpha)$ of type $x$. In the block corresponding to variable $x$, the bits of $z(x,\alpha)$ form the number  $\lambda B - d(x) \cdot f(\alpha)$, where $d(x)$ is the number of clauses containing $x$. In all blocks corresponding to other variables, the bits of $z(x,\alpha)$ are 0.

Next we describe the items of type $C$, where $C$ is a constraint. Let $x_1,\ldots,x_s$ be the variables that are contained in $C$. For any assignment $\alpha_1,\ldots,\alpha_s \in [2^a]$ of $x_1,\ldots,x_s$ that satisfies the clause $C$, there is an item $z(C,\alpha_1,\ldots,\alpha_s)$ of type $C$. In the block corresponding to variable $x_i$ the bits of $z(C,\alpha_1,\ldots,\alpha_s)$ form the number $f(\alpha_i)$, for any $1 \le i \le s$. In all blocks corresponding to other variables, the bits are~0.

\paragraph{Example:}
Suppose $a = 1$ and $\hat c_{k,\eps}=2$, and consider a CSP with variables $x_1,x_2,x_3$ over the universe $[2^a] = \{1,2\}$, and constraints $C_1 = (x_1 = x_2)$, $C_2 = (x_2 \neq x_3)$, and $C_3 = (x_1 =1 \Rightarrow x_3 =1)$. Note that $\lambda = 2$. We construct the 2-average-free set $S=\{1,2\}$; in particular, we may set $B = 2$, and use the injective mapping $f:[2^a] \to S$ defined by $f(x)=x$. The following items correspond to the CSP variables (the $|$-symbols mark block boundaries and have no other meaning):
\begin{align*}
z(x_1,1)   &=    1|000|100000|000|0010|0000|0000| \\
z(x_1,2)   &=    1|000|100000|000|0000|0000|0000| \\
z(x_2,1)   &=    1|000|010000|000|0000|0010|0000| \\
z(x_2,2)   &=    1|000|010000|000|0000|0000|0000| \\
z(x_3,1)   &=    1|000|001000|000|0000|0000|0010| \\
z(x_3,2)   &=    1|000|001000|000|0000|0000|0000|
\end{align*}
And the following items correspond to the constraints:
\begin{align*}
z(C_1,1,1) &=    1|000|000100|000|0001|0001|0000| \\
z(C_1,2,2) &=    1|000|000100|000|0010|0010|0000| \\
z(C_2,1,2) &=    1|000|000010|000|0000|0001|0010| \\
z(C_2,2,1) &=    1|000|000010|000|0000|0010|0001| \\
z(C_3,1,1) &=    1|000|000001|000|0001|0000|0001| \\
z(C_3,2,1) &=    1|000|000001|000|0010|0000|0001| \\
z(C_3,2,2) &=    1|000|000001|000|0010|0000|0010|
\end{align*}
We set the target to 
\[T =  110|000|111111|000|0100|0100|0100|.\] 
One can readily verify that $T$ sums up to $z(x_1,2) + z(x_2,2) + z(x_2,1) + z(C_1,2,2) + z(C_2,2,1) + z(C_3,2,1)$, and that no other subset sums up to $T$. That is, the subsets summing to $T$ are
in one-to-one correspondence to the satisfying assignments of the CSP.

\paragraph{Correctness:}
Recall that the first $O(\log \hat n) + \hat n + \hat m$ bits ensure that we pick exactly one item of each type. Consider any variable $x$ and the corresponding block of bits. The item of type $x$ picks an assignment $\alpha$, resulting in the number $\lambda B - d(x) \cdot f(\alpha)$, where $d(x)$ is the degree of $x$. The $d(x)$ constraints containing $x$ pick assignments $\alpha_1, \ldots, \alpha_{d(x)}$ and contribute $f(\alpha_1) + \ldots + f(\alpha_{d(x)})$. Hence, the total contribution in the block is
\[ f(\alpha_1) + \ldots + f(\alpha_{d(x)}) - d(x) \cdot f(\alpha) + \lambda B, \]
where $d(x) \le \lambda$. Since $f$ maps to a $\lambda$-average-free set, we can only obtain the target $\lambda B$ if $f(\alpha_1) = \ldots  = f(\alpha_{d(x)}) = f(\alpha)$. Since $f$ is injective, this shows that any correct solution picks a coherent assignment $\alpha$ for variable $x$. Finally, this coherent choice of assignments for all variables satisfies all clauses, since clause items only exist for assignments satisfying the clause. Hence, we obtain an equivalent \SSUM instance.

Note that the length of blocks corresponding to variables is set so that there are no carries between blocks, which is necessary for the above argument. Indeed, the degree $d(x)$ of any variable~$x$ is at most $\lambda$, so the clauses containing $x$ can contribute at most $\lambda \cdot B$ to its block, while the item of type $x$ also contributes $0 \le \lambda B - d(x) \cdot f(\alpha) \le \lambda B$, which gives a number in $[0,2 \lambda B]$.

\paragraph{Size Bounds:}
Let us count the number of bits in the constructed numbers. We have $O(\log \hat n) + \hat n + \hat m$ bits from the first part ensuring that we pick one item of each type, and $\hat n \cdot (\log B + \log \lambda + O(1))$ bits from the second part ensuring to pick coherent and satisfying assignments.
This yields
\begin{align*} 
  \log T &= O(\log \hat n) + \hat n + \hat m + \hat n \cdot (\log B + \log \lambda + O(1)).
\end{align*}
We now plug in $B = \lambda^{c/\eps} (2^a)^{1+\eps}$ and $\lambda = \hat c_{k,\eps} a$ and $\hat n = \hat m = \lceil n/a \rceil$ and we bound $\hat n \log B = \big(\frac n a + O(1)\big) \big(\frac c \eps \log \lambda + (1+\eps) a\big) = (1+\eps) n + O_{k,\eps}(n \log (a) / a )$, to obtain
\begin{align*} 
  \log T &= (1 + \eps) n + O_{k,\eps}( n \log(a) /a ), 
\end{align*}
where the hidden constant depends only on $k$ and $\eps$.
Since $(\log a) /a$ tends to 0 for $a \to \infty$, we can choose $a$ sufficiently large, depending on $k$ and $\eps$, to obtain $\log T \le (1+\eps) n + \eps n \le (1+2 \eps) n$. 

Let us also count the number of constructed items. We have one item for each variable $x$ and each assignment $\alpha \in [2^a]$, amounting to $2^a \hat n \le 2^a n$ items. Moreover, we have one item for each clause $C$ and all assignments $\alpha_1,\ldots,\alpha_s \in [2^a]$ that jointly satisfy the clause $C$, where $s \le \lambda$ is the number of variables contained in $C$. This amounts to up to $2^{a \lambda} \hat m \le 2^{a \lambda} n \le 2^{\hat c_{k,\eps} a^2} n$ items. Note that both factors only depend on $k$ and $\eps$, since $a$ only depends on $k$ and $\eps$. Thus, the number of items is bounded by $\tilde c_{k,\eps} n$, where $\tilde c_{k,\eps}$ only depends on $k$ and $\eps$.

In total, we obtain a reduction that maps an instance $\phi$ of \kSAT on $n$ variables to $2^{\eps n}$ instances of \SSUM with target at most $2^{(1+2 \eps)n}$ on at most $\tilde c_{k,\eps} n$ items. The running time of the reduction is clearly $\poly(n) 2^{\eps n}$.
\end{proof}

Our main result (Theorem~\ref{thm:sethsubsetsum}) now follows.

\begin{proof}[Proof of Theorem~\ref{thm:sethsubsetsum}]
\emph{Subset Sum:}
For any $\eps > 0$ set $\eps' := \eps/5$ and let $k$ be sufficiently large so that \kSAT has no $O(2^{(1-\eps') n})$ algorithm; this exists assuming SETH. Set $\delta := \eps' / \tilde c_{k,\eps'}$, where $\tilde c_{k,\eps'}$ is the constant from Theorem~\ref{thm:redsatsubsetsum}. Now assume that \SSUM can be solved in time $O(T^{1-\eps} 2^{\delta n})$. We show that this contradicts SETH. Let $\phi$ be a \kSAT instance on $n$ variables, and run Theorem~\ref{thm:redsatsubsetsum} with $\eps'$ to obtain $2^{\eps' n}$ instances of \SSUM on at most $\tilde c_{k,\eps'} n$ items and target at most $2^{(1+2\eps')n}$. Using the assumed $O(T^{1-\eps} 2^{\delta n})$ algorithm on each \SSUM instance, yields a total time for \kSAT of
\begin{align*} 
  &O\big(\poly(n) 2^{\eps' n} + 2^{\eps' n} \cdot \big(2^{(1+2\eps')n} \big)^{1-\eps} 2^{\delta \cdot \tilde c_{k,\eps'} n} \big) \\
  &= O\big( 2^{(\eps' + (1+2 \eps')(1-5\eps') + \eps')n} \big) \le O\big( 2^{(1-\eps') n} \big), 
\end{align*}
where we used the definitions of $\eps'$ and $\delta$ as well as $(1+2 \eps')(1-5\eps') \le 1-3\eps'$. This running time contradicts SETH, yielding the lower bound for \SSUM.

\emph{\kSUM:}
The lower bound $O(T^{1-\eps} n^{\delta k})$ for \kSUM now follows easily from the lower bound for \SSUM.
%
Consider a \SSUM instance $(Z,T)$ on $|Z| = n$ items and target $T$. Partition $Z$ into sets $Z_1, \ldots, Z_k$ of of equal size, up to $\pm 1$. For each set $Z_i$, enumerate all subset sums $S_i$ of $Z_i$, ignoring the subsets summing to larger than $T$. Consider the \kSUM instance $(S_1,\ldots,S_k,T)$, where the task is to pick items $s_i \in S_i$ with $s_1 + \ldots + s_k = T$. Since $|S_i| \le O(2^{n/k})$, an $O(T^{1-\eps} n^{\delta k})$ time algorithm for \kSUM now implies an $O(T^{1-\eps} 2^{\delta n})$ algorithm for \SSUM, thus contradicting SETH.
\end{proof}


\subsection{Direct-OR Theorem for Subset Sum}
We now provide a proof for Corollary~\ref{cor:direct}. We show that deciding whether at least one of $N$ given instances of \SSUM is a YES-instance requires time $(N T)^{1-o(1)}$, where $T$ is a common upper bound on the target. Here we crucially use our reduction from \kSAT to \SSUM, since the former has an easy self-reduction allowing us to tightly reduce one instance to multiple subinstances, while such a self-reduction is not known for \SSUM.

\begin{proof}[Proof of Corollary~\ref{cor:direct}]
Let $\eps > 0$ and $\gamma > 0$, we will fix $\delta > 0$ later.
Assume that the OR of $N$ given instances of \SSUM on target values $T_1, \ldots, T_N = O(N^\gamma)$ and at most
$\delta \log N$ numbers each, can be solved in total time $O(N^{(1 + \gamma)(1 - \eps)})$. We will show that SETH fails.

Let $\phi$ be an instance of \kSAT on $n$ variables. Split the set of variables into $X_1$ and $X_2$ of size $n_1$ and $n_2$, such that $n_2 = \gamma \cdot n_1$ up to rounding. Specifically, we can set $n_1 := \lceil \frac n{1+\gamma} \rceil$ and $n_2 := \lfloor \frac {\gamma n}{1+\gamma} \rfloor$ and thus have $n_2 \le \gamma n_1$. Enumerate all assignments of the variables in $X_1$. For each such assignment $\alpha$ let $\phi_\alpha$ be the resulting \kSAT instance after applying the partial assignment~$\alpha$.

For each $\phi_\alpha$, run the reduction from Theorem~\ref{thm:redsatsubsetsum} with $\eps' = \min\{1/2, 1/\gamma\} \cdot \eps/2$, resulting in at most $2^{\eps' n_2}$ instances of \SSUM on at most $\tilde c_{k,\eps'} n_2$ items and target at most $2^{(1+2\eps') n_2}$. In total, we obtain at most $2^{n_1 + \eps' n_2}$ instances of \SSUM, and $\phi$ is satisfiable iff at least one of these \SSUM instances is a YES-instance. Set $N := 2^{(1+\eps/2)n_1}$ and note that the number of instances is at most $2^{n_1 + \eps' n_2} \le 2^{(1+\gamma \eps')n_1} \le N$, and that the target bound is at most $2^{(1+2\eps') n_2} \le 2^{(1+2 \eps')\gamma n_1} \le N^\gamma$.
Thus, we constructed at most $N$ instances of \SSUM on target at most $N^\gamma$, each having at most $\tilde c_{k,\eps'} n_2 \le \tilde c_{k,\eps'} n$ items.

Using the assumed algorithm, the OR of these instances can be solved in total time $O(N^{(1 + \gamma)(1 - \eps)})$. Since $(1+\gamma) n_1 = (1+\gamma) \lceil \frac n{1+\gamma} \rceil \le n + 1+ \gamma = n + O(1)$ and $(1+\eps/2)(1-\eps) \le 1-\eps/2$, this running time is
\begin{align*}
  O\big(N^{(1 + \gamma)(1 - \eps)}\big) &= O\big(\big(2^{(1+\eps/2)n_1}\big)^{(1+\gamma)(1-\eps)}\big) \\
  &= O\big(2^{(1-\eps/2) n}\big),
\end{align*} 
which contradicts SETH. Specifically, assuming SETH for some $k = k(\eps)$ this running time is less than the time required for \kSAT. Setting $\delta := \tilde c_{k,\eps'}$ finishes the proof.
\end{proof}

\section{The Bicriteria \boldmath{$s,t$}-Path Problem}
\label{sec: sethbicriteria}

In this section we apply the results of the previous section to the \bipath problem. We will show that the \bipath problem is in fact harder than \SSUM, by proving that the classical pseudo-polynomial time algorithm for the problem cannot be improved on sparse graphs assuming SETH. We also prove Theorem~\ref{thm: lambdachi} concerning a bounded number of different edge-lengths $\lambda$ and edge-costs $\chi$ in the input network, and Theorem~\ref{thm: k} concerning a bounded number $k$ of internal vertices in a solution path.

\subsection{Sparse networks}

We begin with the case of sparse networks; \emph{i.e.} input graphs on $n$ vertices and $O(n)$ edges. We embed multiple instances of \SSUM into one instance of \bipath to prove Theorem~\ref{thm:cspath}, namely that there is no algorithm for \bipath on sparse graphs faster than the well-known $O(\min\{nL,nC\})$-time algorithm.

\begin{proof}[Proof of Theorem~\ref{thm:cspath}]
We show that for any $\eps > 0, \,\gamma > 0$, an algorithm solving \bipath on sparse $n$-vertex graphs and budgets $L,C = \Theta(n^\gamma)$ in time $O(n^{(1+\gamma)(1-\eps)})$ contradicts SETH. As in Corollary~\ref{cor:direct}, let $(Z_1,T_1),\ldots,(Z_N,T_N)$ be instances of \SSUM on targets $T_i \le N^\gamma$ and number of items $|Z_i| \le \delta \log N$ for all $i$. Without loss of generality, we can assume that all sets $Z_i$ have the same size $k = \delta \log N$ (\emph{e.g.}, by making $Z_i$ a multiset containing the number 0 multiple times).

Fix an instance $(Z_i,T_i)$ and let $Z_i = \{z_1, \ldots, z_k\}$. We construct a graph $G_i$ whose vertex set is $\{s, v_1,\ldots, v_k, t\}$. Writing $v_0 := s$ for simplicity, for each $j \in [k]$ we add an edge from $v_{j-1}$ to $v_j$ with length $z_j$ and cost $N^\gamma - z_j$, and we add another\footnote{Note that parallel edges can be avoided by subdividing all constructed edges.} edge from $v_{j-1}$ to $v_j$ with length 0 and cost $N^\gamma$. Finally, we add an edge from $v_k$ to $t$ with length $N^\gamma - T_i$ and cost $T_i$. Then the set of $s,t$-paths corresponds to the power set of $Z_i$, and the $s,t$-path corresponding to $Y \subseteq Z_i$ has total length $N^\gamma - T_i + \sum_{y \in Y} y$ and cost $k N^\gamma + T_i - \sum_{y \in Y} y$. Hence, setting the upper bound on the length to $L = N^\gamma$ and on the cost to $C = k N^\gamma$, there is an $s,t$-path respecting these bounds iff there is a subset $Y$ of $Z_i$ summing to $T_i$, i.e., iff $(Z_i,T_i)$ is a YES-instance.

We combine the graphs $G_1, \ldots, G_N$ into one graph $G$ by identifying all source vertices $s$, identifying all target vertices $t$, and then taking the disjoint union of the remainder. With the common length bound $L = N^\gamma$ and cost bound $C = k N^\gamma$, there is an $s,t$-path respecting these bounds in $G$ iff some instance $(Z_i,T_i)$ is a YES-instance. Furthermore, note that $G$ has $n = \Theta(N \log N)$ vertices, is sparse, and can be constructed in time $O(N \log N)$. Hence, an $O(n^{(1+\gamma)(1-\eps)})$ time algorithm for \bipath would imply an $O(N^{(1+\gamma)(1-\eps)} \textup{polylog} N) = O(N^{(1+\gamma)(1-\eps/2)})$ time algorithm for deciding whether at least one of $N$ \SSUM instances is a YES-instance, a contradiction to SETH by Corollary~\ref{cor:direct}.

Finally, let us ensure that $L, C = \Theta(n^\gamma)$. Note that the budgets $L$ and $C$ are both bounded by $O(N^\gamma \log N)$. If $\gamma \ge 1$, then add a supersource $s'$ and one edge from $s'$ to $s$ with length and cost equal to $N^\gamma \log^\gamma N$, and add $N^\gamma \log^\gamma N$ to $L$ and $C$. This results in an equivalent instance, and the new bounds $L,C$ are $\Theta(N^\gamma \log^\gamma N) = \Theta(n^\gamma)$. If $\gamma < 1$, then do the same where the length and cost from $s'$ to $s$ is $N^\gamma \log N$, and then add $N \log^{1/\gamma} N$ dummy vertices to the graph to increase $n$ to $\Theta(N \log^{1/\gamma} N)$. Again we obtain budgets $L,C = \Theta(N^\gamma \log N) = \Theta((N \log^{1/\gamma} N)^\gamma) = \Theta(n^\gamma)$. In both cases, the same running time analysis as in the last paragraph goes through. This completes the proof of Theorem~\ref{thm:cspath}.
\end{proof}
%
%

\subsection{Few different edge-lengths or edge-costs}
\label{sec: lambda+chi}


We next consider the parameters $\lambda$ (the number of different edge-lengths) and $\chi$ (the number of different edge-costs). We show that \bipath can be solved in $O(n^{\min\{\lambda,\chi\}+2})$ time, while its unlikely to be solvable in $O(n^{\min\{\lambda,\chi\} - 1 -\eps})$ for any $\eps > 0$, providing a complete proof for Theorem~\ref{thm: lambdachi}. The upper bound of this theorem is quite easy, and is given in the following lemma.

\begin{lemma}
\label{lemma: simple xp}
\bipath can be solved in $O(n^{\min\{\lambda,\chi\}+2})$ time.
\end{lemma}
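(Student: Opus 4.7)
By symmetry, I will assume without loss of generality that $\lambda \le \chi$, so that $\min\{\lambda,\chi\}=\lambda$, and give an algorithm running in $O(n^{\lambda+2})$ time; the $\chi$ case is identical after swapping the roles of length and cost. Let $L_1,\ldots,L_\lambda$ be the distinct edge-lengths, and for an edge $e$ let $i(e)\in[\lambda]$ denote the index with $\ell(e)=L_{i(e)}$. The first observation is that since edge lengths and costs are non-negative, it suffices to find an $s,t$-\emph{walk} of length at most $L$ and cost at most $C$: repeatedly shortcutting cycles from such a walk yields an $s,t$-path whose length and cost can only decrease, hence it is a valid solution. Consequently, I only need to consider count vectors $\mathbf{n}=(n_1,\ldots,n_\lambda)\in\mathbb{Z}_{\ge 0}^{\lambda}$ with $\sum_i n_i\le n-1$ (since any $s,t$-path has at most $n-1$ edges); there are $\binom{n-1+\lambda}{\lambda}=O(n^\lambda)$ such vectors for constant~$\lambda$.

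The plan is a straightforward dynamic program over (vertex, length-profile)-pairs. Define
\[
  D[v][\mathbf{n}] \;=\; \min\Big\{\sum_{e\in W} c(e)\,:\, W\text{ is an $s,v$-walk using exactly }n_i\text{ edges of length }L_i,\ \forall i\in[\lambda]\Big\},
\]
with $D[v][\mathbf{n}]=+\infty$ if no such walk exists. The base case is $D[s][\mathbf 0]=0$ and $D[v][\mathbf 0]=+\infty$ for $v\neq s$. The transition is
\[
  D[w][\mathbf{n}] \;=\; \min_{(v,w)\in E,\ n_{i(v,w)}\ge 1}\Big(D[v][\mathbf{n}-\mathbf{e}_{i(v,w)}] + c(v,w)\Big),
\]
where $\mathbf{e}_i$ is the $i$-th unit vector. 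Process count vectors in order of increasing $\sum_i n_i$; this is a valid topological order on the state graph because any edge-transition strictly increases the total count, so all required predecessor values are already computed. After the DP, answer YES iff there exists some $\mathbf{n}$ with $\sum_i n_i\le n-1$, $\sum_i n_i L_i\le L$, and $D[t][\mathbf{n}]\le C$.

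Correctness follows from the two observations above: (i) the DP computes, for every reachable $(v,\mathbf{n})$, the minimum cost of an $s,v$-walk with exactly that length-profile (standard induction on $\sum_i n_i$), and (ii) a valid solution exists iff such a walk exists with $\sum_i n_i L_i\le L$ and cost $\le C$. For the running time, there are $n\cdot O(n^{\lambda})=O(n^{\lambda+1})$ relevant states, and the work to fill in all entries for a fixed $\mathbf{n}$ is $O(|E|)=O(n^2)$, summing to $O(n^{\lambda+2})$ overall. The only mildly delicate step is handling walks rather than simple paths; the shortcutting argument above resolves this cleanly, and I do not expect any serious obstacle beyond bookkeeping.
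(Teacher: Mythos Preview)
Your proposal is correct and follows essentially the same dynamic program as the paper: both index states by a vertex together with the multiplicity vector of edge-length classes, use the same base case and transition, and read off the answer by checking whether some profile at $t$ has total length $\le L$ and DP value $\le C$, yielding $O(n^{\lambda+1})$ states and $O(n)$ work per state. The one place you are slightly more careful than the paper is in explicitly noting that the DP computes minimum-cost \emph{walks} and invoking the nonnegativity-based shortcutting argument; the paper simply speaks of ``paths'' without addressing simplicity, so your version is if anything cleaner.
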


\begin{proof}
It suffices to give an $O(n^{\lambda + 2})$ time algorithm, as the case of time $O(n^{\chi+2})$ is symmetric, and a combination of these two algorithms yields the claim.
Let $\tilde{\ell}_1,\ldots,\tilde{\ell}_\lambda$ be all different edge-length values. We compute a table $T[v,i_1,\ldots,i_\lambda]$, where $v \in V(G)$ and $i_1,\ldots,i_\lambda \in \{0,\ldots,n\}$, which stores the minimum cost of any $s,v$-path that has exactly $i_j$ edges of length $\tilde{\ell}_j$, for each $j \in \{1,\ldots,\lambda\}$. For the base case of our computation, we set $T[s,0,\ldots,0]=0$ and $T[s,i_1,\ldots,i_\lambda]=\infty$ for entries with some $i_j \neq 0$. The remaining entries are computed via the following recursion:
\begin{align*}
T[v,i_1,\ldots,i_\lambda] = 
&\min_{1 \le j \le \lambda} \min_{\substack{(u,v) \in E(G),\\ \ell((u,v)) = \tilde{\ell}_j.}} T[u,i_1,\ldots,i_j-1,\ldots,i_\lambda] + c((u,v)).
\end{align*}

It is easy to see that the above recursion is correct, since if $e_1,\ldots,e_k$ is an optimal $s,v$-path corresponding to an entry $T[v,i_1,\ldots,i_\lambda]$ in $T$, with $e_k=(u,v)$ and $\ell(e_k) = \tilde{\ell}_j$ for some $j \in \{1,\ldots,\lambda\}$, then $e_1,\ldots,e_{k-1}$ is an optimal $s,u$-path corresponding to the entry $T[u,i_1,\ldots,i_j-1,\ldots,i_\lambda]$. Thus, after computing table $T$, we can determine whether there is a feasible $s,t$-path in~$G$ by checking whether there is an entry $T[t,i_1,\ldots,i_\lambda]$ with $\sum_{j=1}^\lambda i_j \cdot \tilde{\ell}_j \leq L$ and $T[t,i_1,\ldots,i_\lambda] \leq C$. As there are $O(n^{\lambda+1})$ entries in $T$ in total, and each entry can be computed in $O(n)$ time, the entire algorithm requires $O(n^{\lambda+2})$ time.
\end{proof}

We now turn to proving the lower-bound given in Theorem~\ref{thm: lambdachi}. The starting point is our lower bound for \kSUM ruling out $O(T^{1-\eps} n^{\delta k})$ algorithms (Theorem~\ref{thm:sethsubsetsum}). We present a reduction from \kSUM to \bipath, where the resulting graph in the \bipath instance has few different edge-lengths and edge-costs.

Let $(Z_1,\ldots,Z_k,T)$ be an instance of \kSUM with $Z_i \subset [0,T]$ and $|Z_i| \le n$ for all $i$, and we want to decide whether there are $z_1 \in Z_1, \ldots, z_k \in Z_k$ with $z_1 + \ldots + z_k = T$. We begin by constructing an acyclic multigraph $G^*$, using similar ideas to those used for proving Theorem~\ref{thm:cspath}. The multigraph $G^*$ has $k+1$ vertices $s=v_0,\ldots,v_k=t$, and is constructed as follows: For each $i \in \{1,\ldots,k\}$, we add at most $n$ edges from $v_{i-1}$ to $v_i$, one for each element in $Z_i$. The length of an edge $e \in E(G^*)$ corresponding to element $z_i \in Z_i$ is set to $\ell(e) = z_i$, and its cost is set to $c(e) = T-z_i$.

\begin{lemma}
\label{lemma: small}%
$(Z_1,\ldots,Z_k,T)$ has a solution iff $G^*$ has an $s,t$-path of length at most $L=T$ and cost at most $C=T (k-1)$.
\end{lemma}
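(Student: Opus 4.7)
The plan is to exploit the fact that $G^*$ is a layered acyclic multigraph in which every $s,t$-path has the same ``skeleton''. First I would observe that, since the only edges in $G^*$ go from $v_{i-1}$ to $v_i$, every directed $s,t$-path must consist of exactly $k$ edges, one from each layer, selecting for each $i \in [k]$ precisely one edge $e_i$ among the parallel edges between $v_{i-1}$ and $v_i$. By construction, each such choice corresponds bijectively to picking an element $z_i \in Z_i$ with $\ell(e_i) = z_i$ and $c(e_i) = T - z_i$.

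Next I would compute the total length and cost of a path $P = e_1,\ldots,e_k$ in terms of the chosen elements:
\[
\sum_{i=1}^k \ell(e_i) = \sum_{i=1}^k z_i, \qquad \sum_{i=1}^k c(e_i) = \sum_{i=1}^k (T - z_i) = kT - \sum_{i=1}^k z_i.
\]
Plugging these into the two budgets, the length constraint $\sum_i \ell(e_i) \le L = T$ becomes $\sum_i z_i \le T$, while the cost constraint $\sum_i c(e_i) \le C = T(k-1)$ rearranges to $\sum_i z_i \ge T$. Both constraints hold simultaneously if and only if $\sum_{i=1}^k z_i = T$.

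From this equivalence both directions follow immediately. If $(z_1,\ldots,z_k)$ is a solution to the \kSUM instance, then the corresponding path in $G^*$ has length exactly $T \le L$ and cost exactly $T(k-1) \le C$. Conversely, any $s,t$-path in $G^*$ meeting both budgets selects elements $z_i \in Z_i$ summing to exactly $T$, giving a solution to $(Z_1,\ldots,Z_k,T)$. I do not expect any real obstacle here: the multigraph is deliberately engineered so that the ``$\le L$'' and ``$\le C$'' budgets squeeze the sum of the chosen elements from above and below to exactly $T$. The only minor care is to note that the argument uses the entire \kSUM hypothesis $Z_i \subset [0,T]$ only implicitly, through the non-negativity of $c(e) = T - z_i$, which ensures that the cost is a valid non-negative edge weight.
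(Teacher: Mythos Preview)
Your proof is correct and follows essentially the same approach as the paper: both arguments use that the length and cost of any $s,t$-path sum to $kT$, so the two budget inequalities force $\sum_i z_i = T$. The paper phrases the converse as ``both inequalities must be tight because $c(e_i) = T - \ell(e_i)$,'' which is exactly your squeeze argument.
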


\begin{proof}
Suppose there are $z_1 \in Z_1, \ldots, z_k \in Z_k$ that sum to $T$. Consider the $s,t$-path $e_1,\ldots,e_k$ in $G^*$, where $e_i$ is the edge from $v_{i-1}$ to $v_i$ corresponding to $z_i$. Then $\sum_{i=1}^k \ell(e_i) = \sum_{i=1}^k z_i = T  = L$, and $\sum_{i=1}^k c(e_i) = \sum_{i=1}^k T-z_i = kT-T = C$. Conversely, any $s,t$-path in $G^*$ has $k$ edges $e_1,\ldots,e_k$, where $e_i$ is an edge from $v_{i-1}$ to $v_i$. If such a path is feasible, meaning that $\sum_{i=1}^k \ell(e_i) \leq L=T$ and $\sum_{i=1}^k c(e_i) \leq C=T(k-1)$, then these two inequalities must be tight because $c(e_i) = T-\ell(e_i)$ for each $i \in [k]$. This implies that the integers $z_1,\ldots,z_k$ corresponding to the edges $e_1,\ldots,e_k$ of~$G^*$, sum  to~$T$.
\end{proof}

Let $\tau \ge 1$ be any constant and let $B := \lceil T^{1/\tau} \rceil$. We next convert $G^*$ into a graph $\tilde{G}$ which has $\tau+1$ different edge-lengths and $\tau+1$ different edge-costs, both  taken from the set $\{0,B^0,B^1,\ldots,B^{\tau-1}\}$. Recall that $V(G^*)=\{v_0,\ldots,v_k\}$, and the length and cost of each edge in $G^*$ is non-negative and bounded by $T$. The vertex set of $\tilde{G}$ will include all vertices of $G^*$, as well as additional vertices. 

For an edge $e \in E(G^*)$, write its length as $\ell(e) = \sum_{i=0}^{\tau-1} a_i B^i$, and its cost as $c(e) = \sum_{i=0}^{\tau-1} b_i B^i$, for integers $a_1,\ldots,a_{\tau-1},b_1,\ldots,b_{\tau-1} \in \{0,\ldots,B-1\}$. We replace the edge $e$ of $G^*$ with a path in $\tilde{G}$ between the endpoints of $e$ that has $\sum_{i=0}^{\tau-1} (a_i + b_i)$ internal vertices. For each $i \in \{0,\ldots,\tau-1\}$, we set $a_i$ edges in this path to have length $B^i$ and cost 0, and $b_i$ edges to have length 0 and cost $B^i$. Replacing all edges of $G^*$ by paths in this way, we obtain the graph $\tilde{G}$ which has $O(n B)$ vertices and edges (since $k$ and $\tau$ are constant). As any edge in $G^*$ between $v_i$ and $v_{i+1}$ corresponds to a path between these two vertices in $\tilde{G}$ with the same length and cost, we have:
\begin{lemma}
\label{lemma: multi}%
Any $s,t$-path in $G^*$ corresponds to an $s,t$-path in $\tilde{G}$ with same length and cost, and vice-versa.
\end{lemma}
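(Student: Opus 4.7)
The plan is to establish a length- and cost-preserving bijection between $s,t$-paths in $G^*$ and $s,t$-paths in $\tilde{G}$. The key property set up by the construction is that each edge $e$ of $G^*$ is replaced \emph{independently} by a directed gadget path $P_e$ whose internal vertices are fresh (not shared with any other gadget), and whose edge weights satisfy
\[
\sum_{f \in P_e} \ell(f) \;=\; \sum_{i=0}^{\tau-1} a_i B^i \;=\; \ell(e), \qquad \sum_{f \in P_e} c(f) \;=\; \sum_{i=0}^{\tau-1} b_i B^i \;=\; c(e),
\]
by the definition of $a_i, b_i$ as the base-$B$ digits.

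For the forward direction, I would take any $s,t$-path $e_1, \ldots, e_k$ in $G^*$ and concatenate the gadgets $P_{e_1}, \ldots, P_{e_k}$. Consecutive edges in $G^*$ share an endpoint in $\{v_0, \ldots, v_k\}$, so the concatenation is a well-defined $s,t$-path in $\tilde{G}$, and summing the two displayed identities over $i = 1, \ldots, k$ shows that total length and total cost are preserved. For the reverse direction, I would first record the structural observation that every internal vertex of a gadget $P_e$ has in-degree $1$ and out-degree $1$ in $\tilde{G}$: such a vertex is created only during the subdivision of~$e$ and lies on the directed chain $P_e$, so it has no other neighbors. Hence any directed $s,t$-path $\pi$ in $\tilde{G}$, upon entering an internal vertex of some $P_e$, is forced to traverse $P_e$ entirely to its terminal endpoint in $G^*$. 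Consequently $\pi$ decomposes uniquely as $P_{e_1} \cdots P_{e_k}$ for some edge sequence $e_1, \ldots, e_k$ of $G^*$, which itself is an $s,t$-path in $G^*$ (since $\pi$ meets the vertices $v_0, \ldots, v_k$ in order), and length and cost again match term by term.

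The only point that needs genuine verification, and hence the main (minor) obstacle, is the in/out-degree claim on internal vertices, since without it the backward direction could in principle ``switch gadgets'' mid-path. This is resolved directly by the construction: for each edge $e$ of the multigraph $G^*$, fresh internal vertices are introduced and used only within $P_e$, so distinct gadgets are internally disjoint and the degree-$1/1$ property is forced. With that observation in hand, both directions of the correspondence are immediate, and the lemma follows.
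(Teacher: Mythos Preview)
Your proof is correct and follows the same reasoning the paper relies on; the paper in fact treats the lemma as immediate from the construction (it is stated without a separate proof, only the sentence ``As any edge in $G^*$ between $v_i$ and $v_{i+1}$ corresponds to a path between these two vertices in $\tilde{G}$ with the same length and cost, we have:''), and your write-up simply spells out the path-decomposition argument, including the degree-$1/1$ observation for internal gadget vertices, that makes this correspondence a genuine bijection.
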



\begin{lemma}
Assuming SETH, for any constant $\lambda, \chi \ge 2$ there is no $O(n^{\min\{\lambda,\chi\} - 1 -\eps})$ algorithm for \bipath for any $\varepsilon>0$.
\end{lemma}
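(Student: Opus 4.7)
The plan is to reduce \kSUM to \bipath using the multigraph $G^*$ and its expansion $\tilde{G}$ constructed above, choose the parameter $\tau$ so that $\tilde{G}$ has exactly the right number of distinct edge-lengths and edge-costs, and derive a contradiction with the \kSUM lower bound of Theorem~\ref{thm:sethsubsetsum}.

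Given constants $\lambda, \chi \ge 2$ and $\eps > 0$, I set $\tau := \min\{\lambda,\chi\} - 1 \ge 1$, so that running the expansion of Section~\ref{sec: lambda+chi} with this value of $\tau$ produces a graph $\tilde{G}$ whose edge-lengths and edge-costs are both drawn from $\{0, B^0, \ldots, B^{\tau-1}\}$, a set of size $\tau+1 \le \min\{\lambda,\chi\}$. Next, I set $\eps' := \eps/(2\tau)$ and invoke Theorem~\ref{thm:sethsubsetsum} on $\eps'$ to obtain a $\delta > 0$ (depending only on $\eps$, $\lambda$, $\chi$) such that, under SETH, \kSUM admits no $O(T^{1-\eps'} n^{\delta k})$ algorithm for any $k$. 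I then fix a constant $k$ large enough that $\delta k \ge \tau$.

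Suppose for contradiction that \bipath admits an $O(N^{\min\{\lambda,\chi\}-1-\eps}) = O(N^{\tau-\eps})$ time algorithm on $N$-vertex graphs with at most $\lambda$ distinct edge-lengths and $\chi$ distinct edge-costs. Given a \kSUM instance with $n$ elements per set and target $T$, I build $G^*$ and $\tilde{G}$ as above; by Lemmas~\ref{lemma: small} and~\ref{lemma: multi} the \kSUM instance has a solution iff $\tilde{G}$ has an $s,t$-path of length at most $L=T$ and cost at most $C=(k-1)T$. Since $\tilde{G}$ has $N = O(nB) = O(n T^{1/\tau})$ vertices (with $k$ and $\tau$ constants), running the hypothetical algorithm on $\tilde{G}$ would solve the \kSUM instance in total time
\[
  O(N^{\tau-\eps}) = O\bigl((n T^{1/\tau})^{\tau-\eps}\bigr) = O\bigl(n^{\tau-\eps} T^{1-\eps/\tau}\bigr) = O\bigl(n^{\delta k} T^{1-\eps'}\bigr),
\]
where the last step uses $\tau-\eps \le \tau \le \delta k$ and $\eps' = \eps/(2\tau) \le \eps/\tau$. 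This contradicts Theorem~\ref{thm:sethsubsetsum} and hence SETH.

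The only subtlety is tracking the interaction between $\tau$, $\eps$, $\delta$, and $k$: once $\tau$ is fixed from $\lambda, \chi$ and $\eps$ is given, the slack factor of $1/2$ in the definition of $\eps'$ ensures that the exponent on $T$ in the derived \kSUM algorithm is strictly smaller than the forbidden $1-\eps'$, while the freedom to enlarge $k$ \emph{after} $\delta$ is produced by Theorem~\ref{thm:sethsubsetsum} absorbs the gap in the exponent on $n$. I do not foresee any serious obstacle beyond this bookkeeping, since the construction of $\tilde{G}$ and its correctness have already been established in Lemmas~\ref{lemma: small} and~\ref{lemma: multi}.
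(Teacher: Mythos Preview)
Your proposal is correct and follows essentially the same route as the paper: set $\tau = \min\{\lambda,\chi\}-1$, feed a \kSUM instance through $G^*$ and $\tilde G$, and translate an $O(N^{\tau-\eps})$ \bipath algorithm into an $O(T^{1-\eps/\tau} n^{\tau})$ \kSUM algorithm, then pick $k$ large enough (after $\delta$ is fixed) so that $n^\tau \le n^{\delta k}$. The only cosmetic difference is your slack factor: the paper simply takes $\eps' = \eps/\tau$, whereas you take $\eps' = \eps/(2\tau)$; since we only need the exponent on $T$ to be \emph{at most} $1-\eps'$, not strictly smaller, the extra factor of $1/2$ is harmless but unnecessary.
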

\begin{proof}
Suppose \bipath has a $O(n^{\min\{\lambda,\chi\} - 1 -\eps})$ time algorithm. We use this algorithm to obtain a fast algorithm for \kSUM, contradicting SETH by Theorem~\ref{thm:sethsubsetsum}.
On a given input $(Z_1,\ldots,Z_k,T)$ of \kSUM on $n$ items, for $\tau := \min\{\lambda,\chi\}-1$ we construct the instance $(\tilde G,s,t,L,C)$ described above. Then $\tilde G$ is a directed acyclic graph with $\tau+1=\min\{\lambda,\chi\}$ different edge-lengths and edge-costs $\{0,B^0,B^1,\ldots,B^{\tau-1}\}$. Moreover, due to Lemmas~\ref{lemma: small} and \ref{lemma: multi}, there are $z_1 \in Z_1, \ldots, z_k \in Z_k$ summing to $T$ iff $\tilde G$ has a feasible $s,t$-path. Thus, we can use our assumed \bipath algorithm on $(\tilde G,s,t,L,C)$ to solve the given \kSUM instance. As $\tilde G$ has $O(n B)$ vertices and edges, where $B = \lceil T^{1/\tau} \rceil$, an $O(n^{\min\{\lambda,\chi\} - 1 -\eps})$ algorithm runs in time $O((n B)^{\tau - \eps}) = O(T^{1-\eps/\tau} n^{\tau})$ time on $(\tilde G,s,t,L,C)$. For $\delta := \delta(\eps/\tau)$ from Theorem~\ref{thm:sethsubsetsum} and $k$ set to $\tau / \delta$, this running time is $O(T^{1-\eps/\tau} n^{\delta(\eps/\tau) k})$ and thus contradicts SETH by Theorem~\ref{thm:sethsubsetsum}.
\end{proof}

\subsection{Solution paths with few vertices}
\label{sec: exact}
\def \kBSP {$k$-BSP }

In this section we investigate the complexity of \bipath with respect to the number of internal vertices $k$ in a solution path. Assuming $k$ is fixed and bounded, we obtain a tight classification of the time complexity for the problem, up to sub-polynomial factors, under Conjecture~\ref{conj:ksum}.

Our starting point is the \EkPath problem: Given an integer $T \in \{0,\ldots,W\}$, and a directed graph $G$ with edge weights, decide whether there is a simple path in $G$ on $k$ vertices in which the sum of the weights is exactly $T$. Thus, this is the "exact" variant of \bipath on graphs with a single edge criterion, and no source and target vertices. The \EkPath problem can be solved in $\tilde{O}(n^{\lceil(k+1)/2\rceil})$ time by a ``meet-in-the-middle" algorithm~\cite{AL13}, where the $\tilde{O}(\cdot)$ notation suppresses poly-logarithmic factors in $W$. It is also known that \EkPath has no $\tilde{O}(n^{\lceil(k+1)/2\rceil-\eps})$ time algorithm, for any $\eps > 0$, unless the \textsc{$k$-Sum} conjecture is false~\cite{AL13}. We will show how to obtain  similar bounds for \bipath by implementing a very efficient reduction between the two problems.

To show that \EkPath can be used to solve \bipath, we will combine multiple ideas. The first is the observation that \EkPath can easily solve the \EkBiPath problem, a variant which involves bicriteria edge weights: Given a pair of integers $(T_1,T_2)$, and a directed graph $G$ with two edge weight functions $w_1(\cdot)$ and $w_2(\cdot)$, decide whether there is a simple path in $G$ on $k$ vertices in which the sum of the $w_i$-weights is exactly $T_i$ for $i \in \{1,2\}$.

\begin{lemma}
\label{twod}
There is an $O(n^2)$ time reduction that reduces an instance of \EkBiPath with edge weights in $\{0,1,\ldots,W\}^2$ to an instance of \EkPath with edge weights in $\{0,1,\ldots, 2kW^2 + W\}$.
\end{lemma}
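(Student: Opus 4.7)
The natural idea is to pack both weight coordinates into a single integer via a suitable linear combination, choosing the multiplier large enough to avoid collisions between the two coordinates but small enough to meet the claimed bound $2kW^2+W$ on the resulting edge weights.

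Concretely, given an instance $(G,w_1,w_2,T_1,T_2)$ of \EkBiPath with $w_1(e),w_2(e)\in\{0,\dots,W\}$, first observe that we may assume $T_1,T_2\le (k-1)W$: otherwise the sum over the $k-1$ edges of any $k$-vertex path cannot meet the target, and we can output a trivial NO-instance. Set $M:=2kW$, and build an \EkPath instance on the same vertex set and edge set of $G$ with weights
\[
w(e)\;:=\;w_1(e)+M\cdot w_2(e), \qquad T\;:=\;T_1+M\cdot T_2.
\]
The reduction visits each of the $O(n^2)$ edges of $G$ once, so it runs in $O(n^2)$ time, and each new weight satisfies $0\le w(e)\le W+M\cdot W=W+2kW^2$, matching the claimed range.

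The main thing to verify is that the reduction preserves the answer. For any simple path $P$ on $k$ vertices with edges $e_1,\dots,e_{k-1}$, let $S_i:=\sum_j w_i(e_j)$, so $\sum_j w(e_j)=S_1+M\,S_2$. If $S_1=T_1$ and $S_2=T_2$, then clearly the $w$-sum equals $T$. Conversely, if $S_1+M\,S_2=T_1+M\,T_2$, then
\[
S_1-T_1\;=\;M\,(T_2-S_2).
\]
Since $0\le S_1\le (k-1)W$ and $0\le T_1\le (k-1)W$, we have $|S_1-T_1|\le (k-1)W<2kW=M$, while the right-hand side is an integer multiple of $M$. The only possibility is $S_1=T_1$ and $S_2=T_2$, which means the original bicriteria constraints are met.

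There is no real obstacle here: the only point that needs any care is choosing $M$ so that (i) $M>(k-1)W$, guaranteeing the two coordinates do not interfere, and (ii) $MW\le 2kW^2$, so that the final weight bound matches the statement. The choice $M=2kW$ satisfies both conditions. Combined with the trivial pre-check that makes the assumption $T_1,T_2\le (k-1)W$ valid, this yields the desired reduction.
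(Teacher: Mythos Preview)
The proposal is correct and is essentially the same argument as the paper's: pack the two coordinates into a single integer via $w(e)=w_1(e)+2kW\cdot w_2(e)$ and use that summing at most $k$ values bounded by $W$ cannot produce a carry into the high block. The only cosmetic differences are that the paper swaps the roles of the two coordinates and does not need your pre-check $T_1,T_2\le(k-1)W$, since in its context the targets already satisfy $T_1,T_2\le W$.
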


\begin{proof}
Define a mapping of a pairs in $\{0,1,\ldots,W\}^2$ to single integers $\{0,\ldots,2kW^2 + W\}$ by setting $f(w_1,w_2)=w_2+w_1\cdot 2kW$ for each $w_1,w_2 \in\{0,\ldots,W\}$. Observe that for any~$k$ pairs $(w^1_1,w^1_2),\ldots,(w^k_1,w^k_2)$, we have $(\sum_{i=1}^k w^i_1 = T_1 \wedge \sum_{i=1}^k w^i_2 = T_2)$ iff $\sum_{i=1}^k f(w^i_1,w^i_2) = f(T_1,T_2)$. Therefore, given a graph as in the statement, we can map each pair of edge weights into a single edge weight, thus reducing to \EkPath without changing the answer.
\end{proof}

The next and more difficult step is to reduce \bipath to \EkBiPath. This requires us to reduce the question of whether there is a path of length and cost at most $L$ and $C$, to questions about the existence of paths with length and cost \emph{equalling exactly} $T_1$ and $T_2$. A naive approach would be to check if there is a path of exact length and cost $(T_1,T_2)$ for all values $T_1 \leq L$ and $T_2 \leq C$. Such a reduction will incur a very large $O(LC)$ overhead. We will improve this to $O(\log^{O(1)}{(L+C)})$.

In the remainder of this section, let $W$ be the maximum of $L$ and $C$. The idea behind our reduction is to look for the smallest $x,y \in [\log{W}]$ such that if we restrict all edge lengths $\ell$ to the $x$ most significant bits of $\ell$, and all edge costs $c$ to the $y$ most significant bits of $c$, then there is a path that satisfies the threshold constraints with equality. To do this, we can check for every pair $x,y$, whether after restricting edge lengths and costs, there is a $k$-path with total weight \emph{exactly} equal to the restriction of the vector $(L,C)$, possibly minus the carry from the removed bits. Since the carry from summing $k$ numbers can be at most $k$, and we have to consider this carry for the length and the cost, we will not have to check more than $O(k^2)$ ``targets" per pair $x,y \in [\log{W}]$.

To implement this formally, we will need the following technical lemma. The proof uses a bit scaling technique that is common in approximation algorithms. Previously, tight reductions that use this technique were presented by Vassilevska and Williams~\cite{VW09} (in a very specific setting), and by Nederlof \emph{et al.}~\cite{NLZ12} (who proved a general statement). We will need a generalization of the result of \cite{NLZ12} in which we introduce a parameter $k$, and show that the overhead depends only on $k$ and $W$, and does not depend on~$n$.

\begin{lemma}
\label{mintoexact}%
Let $U$ be a universe of size $n$ with weight functions $w_1,w_2: U \to \{0,\ldots,W\}$, and let $T_1,T_2 \in \{0,\ldots,W\}$ be integers.
Then there is a polynomial time algorithm that returns a set of weight functions $w_1^{(i)},w_2^{(i)}: U \to \{0,\ldots,W\}$ and integers $T_1^{(i)},T_2^{(i)} \in \{0,\ldots, W\}$, for $i \in [q]$ and $q=O(k^2 \log^2{W})$, such that: For every subset $X \subseteq U$ of size $|X|=k$ we have $(w_1(X) \leq T_1  \wedge w_2(X) \leq T_2)$ if and only if there exists an $i \in [q]$ with $(w^{(i)}_1(X) = T^{(i)}_1  \wedge w^{(i)}_2(X) = T^{(i)}_2)$.
\end{lemma}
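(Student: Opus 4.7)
The plan is to combine two classical techniques: (i) the canonical aligned dyadic decomposition of each threshold interval $[0,T_i]$ into $O(\log W)$ pieces, and (ii) a carry-aware bit-scaling of the weight functions producing $O(k)$ exact equalities per dyadic piece. Multiplied across the two coordinates, this gives the claimed $O(k^2\log^2 W)$ bicriteria exact instances.

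First I would decompose each $[0,T_i]$ using the binary expansion of $T_i$: scanning bits from MSB to LSB, each $1$-bit of $T_i$ at position $x$ contributes an aligned dyadic interval $[L\cdot 2^x,(L+1)\cdot 2^x-1]$ with $L=\lfloor T_i/2^x\rfloor-1$ to the decomposition, and the singleton $\{T_i\}$ (at scale $x=0$) absorbs the final bit. Membership $v\in[L\cdot 2^x,(L+1)\cdot 2^x-1]$ is exactly the floor identity $\lfloor v/2^x\rfloor=L$, so $w_1(X)\le T_1\wedge w_2(X)\le T_2$ is equivalent to the disjunction, over the $O(\log^2 W)$ pairs of dyadic pieces drawn from the two decompositions, of pairs of such floor identities.

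Next I would convert each floor identity to $k$ exact equalities on a truncated weight function. Set $w_i^{[x]}(u):=\lfloor w_i(u)/2^x\rfloor$; writing $w_i(u)=2^x w_i^{[x]}(u)+r_u$ with $0\le r_u<2^x$ and summing over $X$ with $|X|=k$ gives $\lfloor w_i(X)/2^x\rfloor = w_i^{[x]}(X) + c_i^X$, where the carry $c_i^X:=\lfloor(\sum_{u\in X}r_u)/2^x\rfloor$ lies in $\{0,\ldots,k-1\}$. Thus $\lfloor w_i(X)/2^x\rfloor = L$ becomes $\bigvee_{c=0}^{k-1} w_i^{[x]}(X)=L-c$, with the correct $c$ being the actual carry of $X$. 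Taking the product across the two coordinates produces $O(k^2\log^2 W)$ exact instances of the form $(w_1^{[x]},\,w_2^{[y]},\,L-c_1,\,L'-c_2)$; instances with negative targets are discarded, and all surviving weights and targets lie in $\{0,\ldots,W\}$. The construction is clearly polynomial-time.

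The main step will be verifying the equivalence. The forward direction, that any $X$ with $w_i(X)\le T_i$ certifies one of the constructed exact instances, follows immediately by choosing the dyadic piece containing $w_i(X)$ and setting $c_i$ equal to the actual carry $c_i^X$. The reverse direction, that no $X$ caught by some exact instance has $w_i(X)>T_i$ on either coordinate, is the technical heart of the proof: it will rely on the specific choice $L=\lfloor T_i/2^x\rfloor-1$ arising from a $1$-bit of $T_i$, which places the corresponding dyadic piece strictly inside $[0,T_i]$ and, together with the bounded carry, confines $w_i(X)$ below $T_i$. The trailing singleton $\{T_i\}$ at scale $x=0$ is handled separately by the queries $w_i(X)=T_i-c$ for $c\in\{0,\ldots,k-1\}$, which cleanly capture $w_i(X)\in\{T_i-k+1,\ldots,T_i\}$ without any spurious $X$. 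I expect the reverse-direction verification to be the main obstacle, though the arguments are elementary case analyses once the setup above is in place.
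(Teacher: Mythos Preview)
Your reverse direction does not go through, and the issue is not a matter of case analysis but a genuine flaw in the setup. Fix a coordinate and a dyadic piece at scale $x$ with $L=\lfloor T_1/2^{x}\rfloor-1$. Your exact query is $w_1^{[x]}(X)=L-c$ for some $c\in\{0,\ldots,k-1\}$, but the actual carry $c_1^X$ of $X$ need not equal $c$; all you can conclude is
\[
\Big\lfloor \tfrac{w_1(X)}{2^{x}}\Big\rfloor \;=\; w_1^{[x]}(X)+c_1^X \;=\; L-c+c_1^X \;\in\; \{L-(k-1),\ldots,L+(k-1)\},
\]
which can exceed $L$ and hence does not force $w_1(X)\le T_1$. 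Concretely, take $k=2$, $T_1=4$, and two elements each of weight~$3$. The only $1$-bit of $T_1$ is at $x=2$, giving $L=0$; your sole surviving query at this scale is $w_1^{[2]}(X)=0$, which the set $X$ satisfies since $\lfloor 3/4\rfloor+\lfloor 3/4\rfloor=0$. Yet $w_1(X)=6>4$. So the equivalence fails.

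The paper's argument uses the same truncation $w^{[x]}$ but crucially shifts the targets down by an additional safety margin of $k$: at scale $x$ the targets are of the form $\lfloor T_1/2^{x}\rfloor-k-a$ with $a\ge 1$, so that even after adding back the maximal carry $k-1$ one still has $\lfloor w_1(X)/2^{x}\rfloor<\lfloor T_1/2^{x}\rfloor$, whence $w_1(X)<T_1$. This margin would cause you to miss sets with $w_1(X)$ within $O(k)$ of $T_1$; the paper handles that near-threshold strip by a separate batch of untruncated ``initial'' queries $w_1(X)=T_1-a$ for $a=O(k)$. The scale $x$ used in the analysis is then chosen as the smallest one for which $\lfloor w_1(X)/2^{x}\rfloor\le\lfloor T_1/2^{x}\rfloor-k$, and minimality bounds it from below as well, pinning $w_1^{[x]}(X)$ into a window of width $O(k)$. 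Your dyadic decomposition can be salvaged along the same lines, but not without introducing such a margin and a separate treatment of the top $O(k)$ values.
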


\begin{proof}
Note that for any subset $X \subseteq U$ with $|X| = k$ we have $w_1(X), w_2(X) \in \{0,\ldots,kW\}$. 
We will assume that a number in $ \{0,\ldots,kW\}$ is encoded in binary with $\log(kW)$ bits in the standard way. For numbers $a \in \{0,\ldots,kW\}$ and $x \in [\log{W}]$ we let $[a]_x = \lfloor a / 2^{x} \rfloor$, that is, we remove the $x$ least significant bits of $a$. In what follows, we will construct weight functions and targets for each dimension independently and in a similar way. We will present the construction for the $w_1$'s.

First, we add the weight functions $w_1^{(i)}=w_1$, with target $T^{(i)}_1=T_1-a$ for any $a \in [4k]$. Call these the initial $(i)$'s. Then, for any $x \in [\log{W}]$ and $a \in [2k]$, we add the weight function $w_1^{(i)} (e)=[w_1(e)]_x$, and set the target to $T_1^{(i)}=[T_1]_x - k - a$. This defines $O(k \log{W})$ new functions and targets, and we will show below that for any subset $X \subseteq U$ we have that $w_1(X) \leq T_1$ iff for some $i$ we have $w_1^{(i)}(X) = T_1^{(i)}$. Then, we apply the same construction for $w_2$, and take every pair of constructed functions and targets, to obtain a set of $O(k^2 \log^2{W})$ functions and targets that satisfy the required property.

The correctness will be based on the following bound, which follows because when summing $k$ numbers the carry from removed least significant bits cannot be more than $k$. In particular, for any $x \in [\log W]$ and $a_1,\ldots,a_k \in \{0,\ldots,W\}$, we have
$$
\sum_{j=1}^k [a_j ]_x \leq \left[\sum_{j=1}^k a_j \right]_x \leq k + \sum_{j=1}^k [a_j ]_x.
$$

Fix some $X \subset U$. For the first direction, assume that for some $i$, $w_1^{(i)}(X) = T^{(i)}_1$.
If it is one of the initial $(i)$'s, then we immediately have   $w_1(X)\leq T_1$.
Otherwise, if $X = \{ v_1,\ldots, v_k\}$ then 
\begin{align*}
[w_1(X)]_x = \left[ \sum_{j=1}^k w_1(v_j) \right]_x &\leq k + \sum_{j=1}^k [w_1(v_j) ]_x \\
&= w_1^{(i)}(X) +k \\
&= T^{(i)}_1 +k \leq [T_1]_x -1,
\end{align*}
which implies that $w_1(X) < T_1$.

For the other direction, assume that $w_1(X) \le T_1$. If $w_1(X) \geq T_1-4k$ then for one of the initial $(i)$'s we will have $w_1^{(i)}(X) = w_1(X) = T_1-a = T_1^{(i)}$ for some $a \in [4k]$. Otherwise, let $x$ be the largest integer in $[\log W]$ for which $[w_1(X)]_x \leq [T_1]_x - k$.
Because $x$ is the largest, we also know that  $[w_1(X)]_x \geq [T_1]_x - 2k$. Therefore,
$$
w_1^{(i)}(X) = \sum_{j=1}^k [w_1(v_j) ]_x \leq \left[\sum_{j=1}^k w_1(v_j) \right]_x \leq [T_1]_x -k,
$$
and
$$
w_1^{(i)}(X) = \sum_{j=1}^k [w_1(v_j) ]_x \geq \left[\sum_{j=1}^k w_1(v_j) \right]_x -k \geq [T_1]_x -3k.
$$
It follows that for some $a \in [2k]$, we have $w_1^{(i)}(X) = [T_1]_x-k-a = T_1^{(i)}$.
\end{proof}

We are now ready to present the main reduction of this section. Let $(G,s,t,L,C)$ be a given instance of \bipath. Our reduction follows three general steps that proceed as follows:
\begin{enumerate}
\item \emph{Color coding:} At the first step, we use the derandomized version of the color coding technique~\cite{AlonYZ95} to obtain $p'=2^{O(k)}\log n=O_k(\log n)$ partitions of the vertex set $V(G) \setminus \{s,t\}$ into $k$ classes $V^{(\alpha)}_1,\ldots,V^{(\alpha)}_k$, $\alpha \in [p']$, with the following property: If there is a feasible $s,t$-path $P$ with $k$ internal vertices in $G$, we are guaranteed that for at least one partition we will have $|V(P) \cap V^{(\alpha)}_i| = 1$ for each $i \in [k]$. By trying out all possible $O_k(1)$ orderings of the classes in each partition, we can assume that if $P=s,v_1,\ldots,v_k,t$, then $V(P) \cap V^{(\alpha)}_i = \{v_i\}$ for each $i \in [k]$.

    \hspace{15pt} Let $p$ denote the total number of ordered partitions. For each ordered partition $\alpha \in [p]$, we remove all edges between vertices inside the same class, and all edges $(u,v)$ where $u \in V^{(\alpha)}_i$, $v \in V^{(\alpha)}_j$, and $j \neq i+1$. We also remove all edges from $s$ to vertices not in $V^{(\alpha)}_1$, and all edges to $t$ from vertices not in $V_k$. Let~$G_\alpha$ denote the resulting graph, with $\alpha \in [p]$ for $p=O_k(\log n)$.\\

\item \emph{Removal of $s$ and $t$:} Next, we next remove $s$ and $t$ from each $G_\alpha$. For every vertex $v \in V_1^{(\alpha)}$, if $v$ was connected with an edge from $s$ of length $\ell$ and cost $c$, then we remove this edge and add this length $\ell$ and cost $c$ to all the edges outgoing from $v$. Similarly, we remove the edge from $v \in V^{(\alpha)}_k$ to $t$, and add its length and cost to all edges ingoing to $v$. Finally, any vertex in $V^{(\alpha)}_1$ that was not connected with an edge from $s$ is removed from the graph, and every vertex in $V^{(\alpha)}_k$ that was not connected to $t$ is removed.\\
\item \emph{Inequality to equality reduction:} Now, for each $G_\alpha$, we apply Lemma~\ref{mintoexact} with the universe $U$ being the edges of $G_\alpha$, and $w_1,w_2:U \to \{0,\ldots,W\}$ being the lengths and costs of the edges. We get a set of $q=O_k(\log^2{W})$ weight functions and targets. For $\beta \in [q]$, let $G_{\alpha,\beta}$ be the graph obtained from $G$ by replacing the lengths and costs with new functions $w^{(\beta)}_1,w^{(\beta)}_2$. The final \EkBiPath is then constructed as $(G_{\alpha,\beta},T^{(\beta)}_1,T^{(\beta)}_2)$.
\end{enumerate}

Thus, we reduce our \bipath instance to at most $O_k(\log{n} \log^2{W})$ instances of \EkBiPath. Note that if $G$ contains a feasible $s,t$-path $P=s,v_1,\ldots,v_k,t$ of length $\ell_P \leq L$ and cost $c_P$, then by correctness of the color coding technique, there is some $\alpha \in [p]$ such that $G_\alpha$ contains $P$ with $V(P) \cap V^{(\alpha)}_i = \{v_i\}$ for each $i \in [k]$. Moreover, the total weight of $v_1,\ldots,v_k$ in $G_\alpha$ is $(\ell_P,c_P)$. By Lemma~\ref{mintoexact}, there is some $\beta \in [q]$ for which the total weight of $v_1,\ldots,v_k$ in $G_{\alpha,\beta}$ is $(T^{(\beta)}_1,T^{(\beta)}_2)$. Thus, $P$ is a solution for $(G_{\alpha,\beta},T^{(\beta)}_1,T^{(\beta)}_2)$. Conversely, by  the same line of arguments, any solution path for some \EkBiPath instance $(G_{\alpha,\beta},T^{(\beta)}_1,T^{(\beta)}_2)$ corresponds to a feasible $s,t$-path in $G$ with $k$ internal vertices.

Thus, we have obtained a reduction from \bipath to \EkBiPath. Combining this with reduction from \EkBiPath to \EkPath given in Lemma~\ref{twod}, we obtain the following.

\begin{lemma}
\label{final}
Fix $k \ge 1$, and let $(G,s,t,L,C)$ be an instance of \bipath where $G$ has $n$ vertices. Set $W=\max \{L,C\}$. Then one can determine whether $(G,s,t,L,C)$ has a solution with $k$ internal vertices by solving $O_k(\log{n} \log^2{W})$ instances of \EkPath on graphs with $O(n)$ vertices and edge weights bounded by $O_k(W^2)$.
\end{lemma}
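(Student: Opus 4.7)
The plan is to compose the three ingredients already assembled in this section into one pipeline, and track the size/weight blowup at each stage. Starting from an instance $(G,s,t,L,C)$ of \bipath with target number of internal vertices $k$, I would first apply derandomized color coding to $V(G)\setminus\{s,t\}$ to obtain $p=O(\log n)$ partitions into $k$ ordered classes $V_1^{(\alpha)},\ldots,V_k^{(\alpha)}$, guaranteeing that if a feasible $k$-vertex path $P=s,v_1,\ldots,v_k,t$ exists, then for some $\alpha$ we have $v_i\in V_i^{(\alpha)}$ for each $i$. For each $\alpha$, I delete every edge that does not go from $V_i^{(\alpha)}$ to $V_{i+1}^{(\alpha)}$ (and trim $s,t$-incident edges analogously), obtaining $G_\alpha$ on $O(n)$ vertices. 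This costs a factor $O(\log n)$ in the number of instances and changes neither the vertex count nor the weights.

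Next I would eliminate $s$ and $t$ by pushing their incident weights into the first/last layers: for each $v\in V_1^{(\alpha)}$ adjacent to $s$ via an edge of length $\ell$ and cost $c$, add $(\ell,c)$ to every edge outgoing from $v$ and delete the $s$-edge; symmetrically absorb edges into $t$ into the incoming edges of $V_k^{(\alpha)}$; then delete vertices in $V_1^{(\alpha)}$ (resp. $V_k^{(\alpha)}$) with no $s$- (resp. $t$-) connection. By construction, feasible $s,t$-paths with $k$ internal vertices in $G_\alpha$ correspond bijectively, with preserved length and cost, to $k$-vertex paths in the resulting graph. The weight bound remains $W=\max\{L,C\}$ up to small additive blowup, which I would absorb into the asymptotics.

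Then I would invoke Lemma~\ref{mintoexact} with universe $U=E(G_\alpha)$ and the two weight functions being lengths and costs, with thresholds $T_1=L$ and $T_2=C$. This produces $q=O(k^2\log^2 W)=O(\log^2 W)$ pairs of new weight functions $w_1^{(\beta)},w_2^{(\beta)}:U\to\{0,\ldots,W\}$ and exact targets $T_1^{(\beta)},T_2^{(\beta)}$ such that the existence of a $k$-edge subset satisfying the two inequalities is equivalent to the existence, for some $\beta$, of a $k$-edge subset hitting the two exact targets. This yields $O(\log n\cdot\log^2 W)$ instances of \EkBiPath on graphs with $O(n)$ vertices and edge weights in $\{0,\ldots,W\}$. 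Finally I would apply Lemma~\ref{twod} to each \EkBiPath instance to collapse the two weights into a single weight in $\{0,\ldots,2kW^2+W\}=O(W^2)$, producing \EkPath instances of the promised size and weight.

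For correctness, the forward direction follows by chasing a hypothetical feasible path through color coding (some $\alpha$ isolates the correct classes), through the $s,t$-removal step (weights are preserved exactly), through Lemma~\ref{mintoexact} (some $\beta$ hits the exact targets), and through Lemma~\ref{twod} (the bijection between pair-sums and single-integer sums). The converse direction is immediate from the same bijections since each transformation is weight-preserving in both directions. No single step is a real obstacle---Lemma~\ref{mintoexact} and the derandomized color coding are the substantive ingredients, and both are already in place---so the proof should just be a careful bookkeeping of how the $O(\log n)$ factor from color coding multiplies with the $O(\log^2 W)$ factor from the inequality-to-equality reduction.
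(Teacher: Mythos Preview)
Your proposal is correct and follows essentially the same approach as the paper: derandomized color coding into $k$ ordered classes, absorption of the $s$- and $t$-incident weights into the layer edges, application of Lemma~\ref{mintoexact} to turn the two threshold constraints into $O(\log^2 W)$ exact-equality instances, and finally Lemma~\ref{twod} to collapse the bicriteria targets into a single weight of size $O(W^2)$. The only (harmless) slip is that after removing $s$ and $t$ a $k$-vertex path has $k-1$ edges, so Lemma~\ref{mintoexact} is invoked with subset size $k-1$ rather than $k$; this does not affect any of the bounds since $k$ is fixed.
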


\begin{corollary}
\label{cor:exact}
For any fixed $k \geq 1$ there is an algorithm solving \bipath on $k$ vertices in $\tilde{O}(n^{\lceil(k+1)/2\rceil})$ time.
\end{corollary}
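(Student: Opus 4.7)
The plan is to derive Corollary~\ref{cor:exact} directly by composing Lemma~\ref{final} with the known meet-in-the-middle algorithm for \EkPath cited earlier in this subsection. First I would take the input instance $(G,s,t,L,C)$ of \bipath and let $W=\max\{L,C\}$. Applying Lemma~\ref{final} yields $O(\log n \log^2 W)$ instances of \EkPath, each on a graph with $O(n)$ vertices and with edge weights bounded by $W^2$.

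Next I would invoke the $\tilde{O}(n^{\lceil(k+1)/2\rceil})$-time algorithm for \EkPath from~\cite{AL13} on each of these instances. Recall that in this algorithm the $\tilde{O}(\cdot)$ notation suppresses factors polylogarithmic in the maximum edge weight, so on instances with edge weights up to $W^2$ the running time is still $\tilde{O}(n^{\lceil(k+1)/2\rceil})$, with the hidden polylog factors depending polynomially on $\log W$. The algorithm returns \textsc{Yes} on the original \bipath instance iff at least one of the produced \EkPath instances is a \textsc{Yes}-instance; correctness follows immediately from Lemma~\ref{final}.

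Summing the running time over all produced instances gives a total of $O(\log n \log^2 W) \cdot \tilde{O}(n^{\lceil(k+1)/2\rceil}) = \tilde{O}(n^{\lceil(k+1)/2\rceil})$, since the extra $\log n \log^2 W$ factor is absorbed by the $\tilde{O}$ notation. There is essentially no obstacle here beyond bookkeeping: Lemma~\ref{final} has already done the substantive work of bridging the bicriteria inequality version to the single-weight exact version, so the corollary is just a direct plug-in of the existing \EkPath bound.
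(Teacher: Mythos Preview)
Your proposal is correct and follows essentially the same approach as the paper: reduce via Lemma~\ref{final} to $O(\log n \log^2 W)$ instances of \EkPath and solve each with the $\tilde{O}(n^{\lceil(k+1)/2\rceil})$ algorithm from~\cite{AL13}. Your version is simply a bit more explicit about why the polylogarithmic overhead is absorbed by the $\tilde{O}(\cdot)$ notation.
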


\begin{proof}
By Lemma~\ref{final}, an instance of \bipath can be reduced to $O(\log{n} \log^2{W})$ instances of \EkPath.
Using the algorithm in \cite{AL13}, each of these \EkPath instances can be solved in $\tilde{O}(n^{\lceil(k+1)/2\rceil})$ time.
\end{proof}

We next turn to proving our lower bound for \bipath. For this, we show a reduction in the other direction, from \EkPath to \bipath.

\begin{lemma}
\label{lemma: lower bound k}
Let $\eps > 0$ and $k \geq 1$. There is no $\tilde{O}(n^{\lceil(k+1)/2\rceil - \eps})$ time algorithm for \bipath on $k$ vertices unless the \textsc{$k$-Sum} conjecture (Conjecture~\ref{conj:ksum}) is false.
\end{lemma}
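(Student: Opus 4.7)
My plan is to reduce \EkPath to the $k$-internal-vertex version of \bipath, after which the cited hardness of \EkPath (no $\tilde O(n^{\lceil (k+1)/2 \rceil - \eps})$-time algorithm under Conjecture~\ref{conj:ksum}) immediately yields the lemma. The reduction uses the same ``duality'' device as in Lemma~\ref{lemma: small}: an exact-sum constraint on a single weight function is simulated by a pair of upper-bound constraints on two complementary weight functions.

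Given an \EkPath instance consisting of a directed graph $G$ on $n$ vertices with edge weights $w : E(G) \to \{0, \ldots, W\}$ and target $T$, I build a \bipath instance $(G', s, t, L, C)$ by setting $V(G') = V(G) \cup \{s, t\}$, keeping every edge $(u,v) \in E(G)$ in $G'$ with length $\ell(u,v) := w(u,v)$ and cost $c(u,v) := W - w(u,v)$, and adding, for every $v \in V(G)$, the auxiliary edges $(s, v)$ and $(v, t)$ of length $0$ and cost $0$. The budgets are $L := T$ and $C := (k-1) W - T$. Note that $G'$ has $n + 2 = O(n)$ vertices and that all weights are non-negative.

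For any $s,t$-path with exactly $k$ internal vertices $v_1, \ldots, v_k$, the path is forced to use the $k-1$ edges $(v_i, v_{i+1}) \in E(G)$ together with $(s, v_1)$ and $(v_k, t)$. Its total length equals $\sum_{i=1}^{k-1} w(v_i, v_{i+1})$ and its total cost equals $(k-1) W - \sum_{i=1}^{k-1} w(v_i, v_{i+1})$. The two budget inequalities $\leq L$ and $\leq C$ can therefore hold simultaneously if and only if $\sum_{i=1}^{k-1} w(v_i, v_{i+1}) = T$, i.e., the vertices $v_1, \ldots, v_k$ form an \EkPath solution in $G$. Hence a $\tilde O(n^{\lceil (k+1)/2 \rceil - \eps})$-time algorithm for \bipath, applied to $G'$, would solve \EkPath within the same bound (up to polylog factors in $W$), contradicting Conjecture~\ref{conj:ksum}.

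The only genuinely delicate point is that the \bipath solution must use \emph{exactly} $k$ internal vertices rather than merely at most $k$, since otherwise a short path using few $E(G)$-edges could trivially satisfy the cost bound. This is consistent with the parameterization in Theorem~\ref{thm: k}, which the matching upper bound of Corollary~\ref{cor:exact} also assumes, so no extra work is required. Once this is in place, the translation from an exact sum to a pair of $\leq$ constraints via the complementary weights $w$ and $W - w$ is the sole nontrivial ingredient, and it is by now a standard trick.
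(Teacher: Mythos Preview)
Your reduction is correct for the ``exactly $k$ internal vertices'' formulation you invoke, and the complementary-weight trick $\ell = w$, $c = W - w$ is indeed the heart of the argument. However, the paper takes a genuinely different route that is worth noting.

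The paper first applies color-coding to the \EkPath instance, producing $O(\log n)$ graphs $G_\alpha$ whose vertex sets are partitioned into layers $V_1^{(\alpha)}, \ldots, V_k^{(\alpha)}$, with edges only going from one layer to the next. It then attaches $s$ only to $V_1^{(\alpha)}$ and $t$ only from $V_k^{(\alpha)}$. The point of this layering is that in the resulting graph $H_\alpha$, \emph{every} $s,t$-path has exactly $k$ internal vertices by construction. Consequently, the paper's reduction targets the \emph{unrestricted} \bipath problem (find any feasible path), not the variant where the algorithm is promised or required to output a path with exactly $k$ internal vertices.

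Your construction, by contrast, connects $s$ and $t$ to all of $V(G)$ with zero-weight edges. This is simpler and saves the $O(\log n)$ color-coding overhead, but it leans entirely on the ``exactly $k$'' constraint: in your $G'$, the two-edge path $s \to v \to t$ has length $0$ and cost $0$ and is always feasible, so an algorithm for the unrestricted problem would be useless here. You flag this point yourself, and under the reading of Theorem~\ref{thm: k} where $k$ is a hard constraint on the solution, your argument goes through. The paper's version is more robust to how the problem is formalized, at the cost of the extra color-coding step.
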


\begin{proof}
We show a reduction from \EkPath to \bipath. This proves the claim, as it is known that an  $\tilde{O}(n^{\lceil(k+1)/2\rceil-\eps})$ time algorithm for \EkPath, for any $\eps > 0$, implies that the \textsc{$k$-Sum} conjecture is false~\cite{AL13}. Let $(G,T)$ be an instance of \EkPath, where $G$ is an edge-weighted graph and $T\in\{0,\ldots,W\}$ is the target. We proceed as follows: As in the upper-bound reduction, we first apply the color-coding technique~\cite{AlonYZ95} to obtain $p=O(\log{n})$ vertex-partitioned graphs $G_1,\ldots,G_p$, where $V(G_\alpha)$ is the disjoint union $V^{(\alpha)}_1 \uplus \cdots \uplus V^{(\alpha)}_k$ for each $\alpha \in [p]$, such that~$G$ has a solution path $P=v_1,\ldots,v_k$ iff for at least one graph $G_\alpha$ we have $V(P)=V^{(\alpha)}_i \cap \{v_i\}$ for each~$i \in [k]$.

We then construct a new graph $H_\alpha$ from each graph $G_\alpha$ as follows: We first remove from $G_\alpha$ all edges inside the same vertex class $V^{(\alpha)}_i$, and all edges between vertices in $V^{(\alpha)}_i$ and vertices in~$V^{(\alpha)}_j$ with $j \neq i+1$. We then replace each remaining edge with weight $x \in \{0,\ldots,W\}$ in $G_\alpha$ with an edge with length $x$ and cost $W-x$ in $H_\alpha$. Then, we add vertices $s,t$ to $H_\alpha$, connect $s$ to all the vertices in $V^{(\alpha)}_1$, connect all the vertices in $V^{(\alpha)}_k$ to $t$, and set the length and cost of all these edges to $0$. To complete the proof, we argue that $G$ has a simple path of weight exactly $T$ iff some $H_\alpha$ contains a feasible $s,t$-path for $L= T$ and $C= (k-1)W-T$.

Suppose $P=v_1,\ldots,v_k$ is a simple path in $G$ with $w(P)=T$. Then there is some $\alpha \in [p]$ such that $P$ is a path in $G_\alpha$ with $V(P)=V^{(\alpha)}_i \cap \{v_i\}$ for each~$i \in [k]$. By construction of $H_\alpha$, $P'=s,v_1,\ldots,v_k,t$ is a path in $H_\alpha$, and it has total length $\ell(P')=w(P)=T \leq L$, and total cost $c(P') = (k-1)W-w(P)= (k-1)W-T \leq C$. Conversely, if $P'=s,v_1,\ldots,v_k,t$ is a feasible $s,t$-path in some $H_\alpha$ with length $\ell(P') \leq L$ and cost $c(P') \leq C$, then $P=v_1,\ldots,v_k$ is path in $G$. We know that the weight of $P$ in $G$ is bounded by above by $w(P) = \ell(P') \leq L = T$. Furthermore, we have $(k-1)W-w(P) = (k-1)W-\ell(P') = c(P') \leq C = (k-1)W-T$, implying that $w(P) \geq T$. These two inequalities imply $w(P)=T$, and thus $P$ is a solution for $(G,T)$.

Thus, we can solve $(G,T)$ by solving $O(\log n)$ instances of \bipath. This means that an $\tilde{O}(n^{\lceil(k+1)/2\rceil-\eps})$ algorithm for \bipath, for $\eps > 0$, would imply an algorithm with the same running time for \EkPath. By the reductions in~\cite{AL13}, this refutes the \textsc{$k$-Sum} conjecture.
\end{proof}

Theorem~\ref{thm: k} now immediately follows from the upper and lower bounds given in Corollary~\ref{cor:exact} and Lemma~\ref{lemma: lower bound k} for finding a solution for a \bipath instance that has $k$ internal vertices.


\medskip
\subsection*{Acknowledgements}
We would like to thank Jesper Nederlof for an inspiring discussion on Subset Sum.

\emph{A.A.} was supported by the grants of Virginia Vassilevska Williams: NSF Grants CCF-1417238, CCF-1528078 and CCF-1514339, and BSF Grant BSF:2012338.
\emph{K.B.:} This work is part of the project TIPEA that has received funding from the European Research Council (ERC) under the European Unions Horizon 2020 research and innovation programme (grant agreement No.\ 850979).
\emph{D.H.} has received funding from the People Programme (Marie Curie Actions) of the European Union's Seventh Framework Programme (FP7/2007-2013) under REA grant agreement number 631163.11, and by the ISRAEL SCIENCE FOUNDATION (grant No. 551145/).


\end{document}